\pgfplotsset{compat=newest,
	HardPlotStyle/.style={
		xlabel={$p$ of $\mathrm{BSC}$},
		xmax = 0.099,
		xmin = 0.010000,
		ylabel={$\mathrm{WER}$},
		ymax = 1,
		xmajorgrids,
		ymajorgrids,
		grid style=dashed,
		legend pos = south east,
		line width=1pt,
		every axis y label/.style={at={(current axis.north west)},above left=3mm},
	},
	SoftPlotStyle/.style={
		xlabel={$\frac{E_b}{N_0}$ in dB},
		xmax = 3.000000,
		xmin = 1.000000,
		xtick={1, 1.5, 2, 2.5, 3, 3.5, 4, 4.25, 4.5},
		ylabel={$\mathrm{WER}$},
		ymax = 1,
		xmajorgrids,
		ymajorgrids,
		grid style=dashed,
		legend pos = south west,
		line width=1pt,
		every axis y label/.style={at={(current axis.north west)},above left=3mm},
	},
	SizeStyleLarge/.style={
		width = \linewidth,
		label style={font={\footnotesize \color{white!15!black}}},
		tick label style = {font = \footnotesize},
		legend style={font = \footnotesize,legend cell align=left, align=left, draw=white!15!black},
	},
	SizeStyleSmall/.style={
		width = \linewidth,
		label style={font={\footnotesize \color{white!15!black}}},
		tick label style = {font = \footnotesize},
		legend style={font = \footnotesize,legend cell align=left, align=left, draw=white!15!black},
	}
}
\newcommand{\citecomment}[2][]{\citen{#2}#1\citevar}
\newcommand{\citeone}[1]{\citecomment{#1}}
\newcommand{\citetwo}[2][]{\citecomment[,~#1]{#2}}
\newcommand{\citevar}{\@ifnextchar\bgroup{;~\citeone}{\@ifnextchar[{;~\citetwo}{]}}}
\newcommand{\citefirst}{\@ifnextchar\bgroup{\citeone}{\@ifnextchar[{\citetwo}{]}}}
\newcommand{\cites}{[\citefirst}
\pgfplotsset{compat=newest}
\newtheorem{theorem}{Theorem}
\newtheorem{lemma}{Lemma}
\newtheorem{definition}{Definition}
\newtheorem{example}{Example}
\newcommand{\wt}{ \mathrm{\, wt \,} }
\newcommand{\dist}[1]{\ensuremath{\operatorname{dist}(#1)}}
\newcommand{\codename}[1]{\ensuremath{\mathcal{C}_{#1}}}
\newcommand{\comment}[1]{}
\newcommand{\pluseq}{\mathrel{{+}{=}}}
\DeclareMathOperator{\atanh}{atanh} 
\begin{document}
	
	\title{On Hard and Soft Decision Decoding of BCH Codes}

	\author{
		Martin~Bossert,~\IEEEmembership{Fellow, IEEE}, Rebekka Schulz and~Sebastian Bitzer,~\IEEEmembership{Student Member, IEEE}\\
		\thanks{Manuscript received July 15, 2021; revised January 11, 2022 and April 01, 2022; accepted June 04, 2022.
		}%
		\thanks{
			M. Bossert and R. Schulz are with the Institute of Communications Engineering, Ulm University, Germany (e-mail: martin.bossert@uni-ulm.de, rebekka.schulz@uni-ulm.de).
		}%
	\thanks{
			S. Bitzer is with the Institute for Communications Engineering, Technical University of Munich (TUM), Germany (e-mail: sebastian.bitzer@tum.de). This work was done while S. Bitzer was with Ulm University.
		}
	} 
	
	\maketitle
	
	\begin{abstract}
	The binary primitive BCH codes are cyclic and are constructed by choosing a subset of the cyclotomic cosets. 
    Which subset is chosen determines the dimension, the minimum distance and the weight distribution of the BCH code. 
    We construct possible BCH codes and determine their coderate, true minimum distance and the non-equivalent codes.
    A particular choice of cyclotomic cosets gives BCH codes which are, extended by one bit, equivalent to Reed-Muller codes, which is a known result from the sixties.
    We show that BCH codes have possibly better parameters than Reed-Muller codes, which are related in recent publications to polar codes.  
    We study the decoding performance of these different BCH codes using information set decoding based on minimal weight codewords of the dual code. 
    We show that information set decoding is possible even in case of a channel without reliability information since the decoding algorithm inherently calculates reliability information. 
    Different BCH codes of the same rate are compared and different decoding performances and complexity are observed.
    Some examples of hard decision decoding of BCH codes have the same decoding performance as maximum likelihood decoding. 
    All presented decoding methods can possibly be extended to include reliability information of a Gaussian channel for soft decision decoding.
    We show simulation results for soft decision list information set decoding and compare the performance to other methods.

	\end{abstract}
	
	\begin{IEEEkeywords}
	BCH Codes with Different Cyclotomic Cosets, Decoding Based on Dual Codewords, Hard and Soft Decision Decoding of BCH Codes, Information Set Decoding without Channel Reliability, ML Lower Bound, Relation BCH and RM Codes
\end{IEEEkeywords}
	
	\section{Introduction}
	BCH codes were introduced in \cite{Hoc, BCHa,BCHb} and are a well known 
	class of block codes with various practical applications.
	There exist many hard and soft decision decoding methods for them which can be classified into
	algebraic methods using the Berlekamp-Massey or the Euclidean algorithm, 
	methods based on the dual code, and information set based methods. Many algorithms are already included in textbooks, see e.g. \cite{McWSl, Boss, Blahut}.
	Algebraic decoding works up to half the designed minimum distance which is determined by the construction of BCH codes.
	Interpolation based decoding \cite{guruswami} can be used to extend the decoding radius. 
	Some non-algebraic decoding methods can also decode beyond half the minimum distance.

	Shortly after the introduction of BCH codes, it was shown in 
	\cite{Kolesnik, Kasami} that Reed-Muller (RM) codes are equivalent to
	BCH codes with a particular choice of cyclotomic cosets for their construction, extended
	by a parity bit (see also \cite{McWSl, infoset}). 
	This implies that the hard and soft decision decoding methods for RM codes 
	can also be applied to BCH codes and vice versa. 
	This fact is particularly interesting since there exist many examples for BCH codes that have a larger minimum distance than the RM codes at the same coderate.
	In addition for RM codes the Plotkin construction \cite{Plotkin}
	allows low complexity decoding \cite{Schnabl, Boss}.
	Further, in \cite{RmPolar_Mondelli, RmPolar_Arikan} the relations of 
	polar codes and RM codes are shown. 
	Note that the construction of polar codes was given earlier in \cite{stolte}
	as a modification of the RM construction.
	Therefore, the use of BCH codes has, besides the cyclic property, the advantage 
	of better code parameters than RM and polar codes which possibly leads to a better decoding performance.
	
	In \cite{ferdi}, minimal weight dual codewords are used for decoding beyond half the minimum distance and the decoding methods presented 
	in this work are based on this approach, which was also used in \cite{haeger}.
	We will show that information set decoding of BCH codes without channel reliability
	based on minimal weight dual codewords
	performs in many cases as good as hard decision maximum likelihood decoding.
	The presented decoding method works also for Reed--Solomon and q-ary BCH codes \cite{jiongyue}
	but this will not be discussed here.
	
	For soft decision decoding of BCH codes, three principles which make use of the reliability
	information from the channel, are known. The first \cite{Chase} uses the most unreliable positions
	which contain more likely errors and constructs flip-patterns for these positions. For all
	of these flip-patterns, a classical decoder is applied. 
	The idea is that if a flip-pattern reduces the number of errors to a value less or equal to half the minimum distance,
	it can be decoded by a classical decoder.
	The second \cite{Dorsch} uses the most reliable positions as information set and re-encodes
	with these positions. Also here flip-patterns are used in case the information set contains erroneous
	positions. In \cite{linfos}, both methods are studied under the name ordered statistics.
	The third method uses dual codewords to calculate extrinsic information and reduces
	the number of errors iteratively. 
	Various variants are known and 
	some recent results can be found in \cite{Wu, Khamy, haeger, Liva}. 
	Also, the determination of an information set after applying the extrinsic idea is known \cite{BP_ISD}.
	We will extend the hard decision information set decoding to soft decision decoding
	which uses channel reliability information.
	Since in all studied examples the decoding methods based on unreliable positions
	show worse results than those based on the most reliable positions, 
	we will restrict to the latter case here.

	\paragraph{Contributions}
	In this work, we analyze different BCH codes which have the same rate but various choices of the cyclotomic cosets and we compare their properties.  We recall a known result that every RM code is a BCH code, with a particular choice of the cyclotomic cosets, extended by a parity bit. We show examples where
	BCH codes have better parameters than RM codes.
	The codewords of the dual code can be used to calculate reliability information for each position. The unreliable ones can be used to reduce the number of errors iteratively, whereas the reliable ones can be used to form an information set.
	In \cite{ferdi}, bit flipping based on all minimal weight dual codewords was introduced and used for iterative error reduction.
	For cyclic codes, this algorithm was modified in \cite{SCC2019} using the cyclically different minimal weight codeword polynomials. 
	Then, it was extended in  \cite{arxivboss}, where information set decoding was mentioned but not studied. 
	In \cite{ferdi, SCC2019,arxivboss}, only classical BCH Codes were considered whereas in this work we also consider other BCH codes.
	In addition, here we focus on information set decoding based
	on codewords of the dual code. Further, we introduce the new redundancy set decoding method. 
	Our approach allows information set decoding in case of a binary symmetric channel, which does not provide any reliability
	information. In many cases our approach shows the same decoding performance as
	hard decision maximum likelihood decoding.
	Error reduction decoding based on unreliable positions shows inferior performance compared to the methods based on the most reliable positions. 
	We propose a soft decision decoding algorithm based on hard decision information set decoding.
	We show that including extrinsic reliability information obtained by dual codewords improves the reliability information of the channel considerably.
	Furthermore, we demostrate that our decoding method allows a flexible complexity-performance trade-off.
	Additionally, a novel choice of the flip-patterns is given which is able to predict the decoding performance. 
	Comparisons to selected literature are provided, namely with polar codes \cite{Liva}, multibasis information set decoding \cite{bias_ISD}, and adaptive belief propagation \cite{ABP}.

	\paragraph{Structure}
	In Section~\ref{sec:bch}, we give a short introduction to BCH codes and their dual codes.
	The choice of the cyclotomic cosets is described and illustrated with several examples.
	Then we recall an old result that punctured RM codes are equivalent to BCH codes.
	Finally, we discuss a systematic basis.
	In Section~\ref{sec:decdual}, we describe the main idea of the decoding 
	based on codewords of the dual code.
	We show the inherent reliability contained in the introduced measure
	and relate the measure to classical concepts, namely bit flipping, majority logic decoding, and
	extrinsic information. 
	In the end, we describe three possible decoding concepts. 
	
	The results of the three different decoding methods are discussed in Section~\ref{sec:hardec}.
	The performance of BCH codes with different lengths and rates is presented and analyzed.
	A maximum likelihood bound is introduced in order to compare the obtained word error rates.
	Also, the decoding complexity is addressed.
	
	In Section~\ref{sec:softdec}, we will describe and analyze the
	soft decision extension of information set decoding.
	The results are compared to known results from \cite{Liva,bias_ISD,ABP}.
	Conclusions will be given in the last section and the
	proof of the equivalence of BCH and RM codes, including the permutation, can be found in the appendix.
	
	\section{BCH Codes}\label{sec:bch}
	We restrict ourselves to binary BCH codes and refer to \cite{McWSl, Boss, Blahut}
	for further details.
	Let $\alpha$ be a primitive element of the field $\mathbb F_{2^m}$.
	$\mathbb F_{2^m} [x]$ denotes
	the polynomials $\mod (x^n -1)$, with $n=2^m-1$.
	
	\begin{definition}[Cyclotomic coset] For $n=2^m-1$ the cyclotomic cosets are 
	\begin{equation*}
		K_i = \{i \cdot 2^j\bmod n, j= 0,1, \ldots, m-1\},\  i=0, \ldots, n-1\;.
	\end{equation*}
	\end{definition}
	Two cyclotomic cosets $K_i$ and $K_j$ with $i \neq j$  are either disjoint or identical.
	Their cardinality is $|K_i| \leq m$ and $|K_0| =1$.
	The cyclotomic cosets define the irreducible polynomials $m_{\ell}(x)$  which
	are the product of the linear factors $m_{\ell}(x)=\prod_{j \in K_\ell}(x-\alpha^{-j})$. The irreducible polynomials have coefficients only in $\mathbb F_{2}$.

	\begin{theorem}[BCH Code]\label{th:bch}
		Let $\mathcal M = \cup K_i$ be a union of some cyclotomic cosets for $n=2^m-1$.
		The binary BCH code with generator polynomial
		\begin{equation}
		g(x)= \prod_{j \in \mathcal M} (x-\alpha^{j})	
		\end{equation}
		has length $n=2^m-1$, dimension $k=n-|\mathcal M|$, and designed minimum distance
		$d$ when $\mathcal M$ contains   $d-1$ consecutive integers.
		The true minimum distance $\delta$ can be larger than $d$.
	\end{theorem}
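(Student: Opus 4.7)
\proof The plan is to address the three parts of the statement in turn, exploiting the hypothesis that $\mathcal M$ is a union of cyclotomic cosets.

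For the length and dimension, I would first group the linear factors $(x-\alpha^{j})$ of $g(x)$ according to which coset $K_\ell$ contains $j$, writing $g(x)=\prod_{\ell} m_\ell(x)$ with each $m_\ell(x)\in\mathbb F_2[x]$ by the discussion preceding the theorem. Because the cosets are disjoint subsets of $\{0,1,\ldots,n-1\}$ and each $\alpha^{j}$ is an $n$-th root of unity, the polynomials $m_\ell$ are pairwise coprime divisors of $x^n-1$, so $g(x)$ itself divides $x^n-1$ and has degree exactly $|\mathcal M|$. The binary cyclic code generated by $g(x)$ then has length $n$, and a basis is given by $g(x),xg(x),\ldots,x^{n-|\mathcal M|-1}g(x)$, yielding $k=n-|\mathcal M|$.

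For the designed minimum distance I would invoke the standard Vandermonde argument. Let $b,b+1,\ldots,b+d-2$ be the consecutive integers contained in $\mathcal M$, and suppose for contradiction that some nonzero codeword $c(x)=\sum_{\ell=1}^{w} c_{i_\ell} x^{i_\ell}$ has weight $w\le d-1$. Evaluating $c(\alpha^{b+s})=0$ for $s=0,\ldots,w-1$ yields $w$ linear equations in the unknowns $c_{i_\ell}$ whose coefficient matrix factors as a $w\times w$ Vandermonde in the distinct nodes $\alpha^{i_1},\ldots,\alpha^{i_w}$ multiplied on the right by $\operatorname{diag}(\alpha^{b i_\ell})$. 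Both factors are nonsingular, forcing $c_{i_\ell}=0$ for every $\ell$, which contradicts $w\ge 1$; hence $\delta\ge d$.

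The final assertion, that $\delta$ can exceed $d$, is not a universal claim but is justified by exhibiting particular codes, as the subsequent sections of the paper do, rather than by a general argument. I expect the Vandermonde step to be the only genuinely technical point, and the lone subtlety there is to note that the $w$ nodes $\alpha^{i_\ell}$ are pairwise distinct because the exponents $i_\ell\in\{0,\ldots,n-1\}$ are distinct and $\alpha$ has multiplicative order $n$.
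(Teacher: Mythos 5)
Your proof is correct. The paper itself states this theorem without proof, treating it as a classical result and referring the reader to the standard textbooks \cite{McWSl, Boss, Blahut}; your argument is precisely the standard one those references give (degree count of $g(x)$ as a product of the coprime minimal polynomials for the dimension, and the BCH bound via the Vandermonde factorization for the designed distance), and the one subtlety you flag --- distinctness of the nodes $\alpha^{i_\ell}$ --- is indeed the point that makes the Vandermonde step go through. Nothing further is needed.
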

	The generator polynomial $g(x)$ is a product of some irreducible polynomials.
	The BCH$(n,k,d)$ code has the parity check polynomial $h(x)$ with $g(x)h(x)=x^n-1$.
	The dual code of a BCH$(n,k,d)$ 
	code is also a BCH$(n,n-k,d^\perp)$ code with generator polynomial 
	$h(x) =(x^n-1)/g(x)$.
	The codewords of the dual code are $c^\perp(x)= a(x) h(x)$ where $a(x)$ 
	is some information polynomial of degree less than $n-k$.
	As a consequence, the product of any codeword and any dual codeword is zero, i.e.,
	\begin{equation}\label{eq:cwtimesdual}
	c(x) c^\perp(x)= 0\mod (x^n-1) 
	\end{equation}
	which is the main property we will use for decoding.

	\subsection{Choice of Cyclotomic Cosets}
	Algebraic decoding of BCH codes works up to half the minimum designed distance.
	Therefore, given the rate, the cyclotomic cosets have to be chosen such that the designed distance of the BCH code is maximized. For other decoding algorithms 
	different criteria for the selection of the cyclotomic cosets might be applied, 
	for example the weight distribution of the resulting code.
	The presented decoding methods depend on minimum weight codewords of the dual code, 
	hence their weight and their number are influencing the decoding performance and complexity. 
	
	When considering minimal weight dual codewords it may happen that these codewords belong to a subcode of the
	dual code as explained in the following.
	Any dual codeword $b(x)$ can be described as the multiplication of some information 
	polynomial $i(x)$ with the parity check polynomial $h(x)$,
	\begin{equation}
	b(x) = i(x) \cdot h(x)\;.
	\end{equation}
	It is possible that the information polynomial has an irreducible polynomial $m_{\ell}(x)$ as factor which means 
	$m_{\ell}(x) | i(x)$. 
	When this factor does not divide the parity check polynomial,
	$m_{\ell}(x) \, \not| \, h(x)$,
	it can be seen as factor of another parity check polynomial $h_\ell(x)= m_{\ell}(x)\cdot h(x)$ with larger degree,
	\begin{align}\label{eqsubcode}
	b(x) &= i(x) \cdot h(x) = \underbrace{(i_\ell(x) \cdot m_{\ell}(x))}_{=i(x)} \cdot \,h(x) \nonumber\\
	&= i_\ell(x) \cdot \underbrace{(m_{\ell}(x) \cdot h(x))}_{h_\ell(x)}\;.
	\end{align}
	Since $\deg(h_\ell(x)) > \deg(h(x))$, the described code has a smaller dimension, thus, is a subcode. 
	However, the dual code of this subcode is a supercode with generator polynomial 
	$g_\ell(x) = (x^n-1)/h_\ell(x)$  since this supercode has generator polynomial $g_\ell(x) =g(x)/ m_{\ell}(x)$ and thus, a larger dimension and possibly a smaller minimum distance.
	Nevertheless, there exist dual codewords $a(x)$, maybe of larger weight, for which  
	$h(x) | a(x)$ and also $h_\ell(x) \, \not| \, a(x)$ hold.

	In the following, we present several examples for parameters and characteristics of BCH codes for different choices of cyclotomic cosets. 
	Since the presented decoder uses codewords from the dual code the decoding performance will
	depend on the particular choice of cyclotomic cosets.
	Note that the number of possible choices of cyclotomic cosets is exponentially growing with the length $n$. 
	All parameters in the examples are found by computer using SageMath \cite{sage}.
	
	\begin{example}[BCH$(63,31,d)$ Code]\label{ex:ratehalbcode}
		For $n=63$, there exist $13$ cyclotomic cosets $K_i, i \in \{0, 1, 3, 5, 7, 9, 11, 13, 15, 21, 23, \\27, 31\}$. 
		The usual construction of BCH codes to maximize the designed minimum distance would choose the cosets $K_i, i \in \{1,3,5,7,9,11\}$
		and the resulting code is BCH$(63,30,13)$. 
		But fixing the dimension to $31$ with the $13$ cosets, $252$ different BCH codes can be constructed and
		several of them are monomially equivalent, i.e. consist of the same codewords given a suitable permutation of the coordinates. Even if the number and weight of minimum weight dual codewords are identical, differences in the weight distribution of the codes may exist.
		The largest designed minimum distance which was found among them was $d=11$ for
		$K_i, i \in \{1,3,5,7,9,21,27\}$ and we will use this code for bounded minimum distance decoding later
		when comparing the decoding performances.
		For $217$ different codes the minimum weight dual codewords are from a subcode. 	
		We select four codes for further investigation and their parameters, the designed distances  $d$, $d^\perp$,
		the true distances $\delta$, $\delta^\perp$, the number of minimal weight dual codewords $L$, and
		the chosen cyclotomic cosets are listed in Table~\ref{table_chosenCodes_params_63_31}.
		It should be noted that only the cyclically different codewords are counted and therefore, the number of minimal weight codewords is $\leq n L$.
		\begin{table}[h!]
			\centering
            \caption{Parameters of selected codes with $n = 63$ and $k = 31$}\label{table_chosenCodes_params_63_31}
			\begin{tabular}{l|l|l|l|l|l|l}
				$\mathcal{C}$ & $\delta$ & $\delta^\perp$ & $L$ & $d$ & $d^\perp$ & $\{i : K_i \subset \mathcal{M}\}$\\\hline\hline
				$\codename{1}$ & 12 & 10 & 5 & 8 & 6 & $\{5, 9, 11, 13, 21, 23, 27\}$\\
				$\codename{2}$ & 12 & 12 & 35 & 7 & 10 & $\{1, 3, 5, 9, 13, 21, 27\}$\\
				$\codename{3}$ & 12 & 12 & 44 & 7 & 8 & $\{1, 5, 7, 9, 13, 21, 27\}$\\
				$\codename{4}$ & 9 & 12 & 52 & 7 & 12 & $\{11, 13, 15, 21, 23, 31\}$\\
			\end{tabular}
		\end{table}
	\end{example}
	
	\begin{example}[BCH$(63,22,d)$ Code]\label{ex:rmcode}
		Here $168$ BCH codes can be constructed. 
		However, several of them are equivalent. In total only $13$ different codes have minimum weight dual codewords 
		which are not all in a subcode according to (\ref{eqsubcode}).
		Four codes are selected for further investigations and their parameters are listed in Table~\ref{table_chosenCodes_params_63_22}.
		For the first two codes, we had to add $19$, respectively $25$ dual codewords of weight $8$ which
		are not in a subcode.
		Three codes have a different designed and true minimum distance.
		The fourth code has the same designed and true distance and has the most minimum weight dual codewords.
		\begin{table*}[h!]
			\centering
			\caption{Parameters of selected codes with $n = 63$ and $k = 22$}\label{table_chosenCodes_params_63_22}
			\begin{tabular}{l|l|l|l|l|l|l|l|l|l}
				$\mathcal{C}$ & $\delta$ & $\delta^\perp$ & $L_{\delta^\perp}$ & $w_{add}$ & $L_{add}$ & $L$ & $d$ & $d^\perp$ & $\{i : K_i \subset \mathcal{M}\}$\\\hline\hline
				$\codename{1}$ & 16 & 6 & 1 & 8 & 19 & 20 & 11 & 4 & $\{3, 5, 7, 9, 11, 13, 15, 21\}$\\
				$\codename{2}$ & 15 & 6 & 1 & 8 & 25 & 26 & 11 & 6 & $\{1, 3, 5, 7, 9, 13, 21, 23\}$\\
				$\codename{3}$ & 15 & 8 & 30 & & 0 & 30 & 11 & 4 & $\{1, 5, 7, 15, 21, 23, 27, 31\}$\\
				$\codename{4}$ & 15 & 8 & 155 & & 0 & 155 & 15 & 8 & $\{1, 3, 5, 7, 9, 11, 13, 21\}$\\
			\end{tabular}
		\end{table*}
	\end{example}

	The next example shows the case when the minimal weight dual codeword belongs to a subcode. 
	\begin{example}[Dual codewords of BCH$(63,22,d)$]
		The BCH code $\codename{2}$ has the parity check polynomial $h(x) = x^{22} + x^{21} + x^{20} + x^{19} + x^{18} + x^{14} + x^{13} + x^{10} + x^9 + x^7 + x^2 + 1$. 
		The minimum weight dual codewords are $b(x) = x^{56} + x^{51} + x^{23} + x^{17} + x^3 + 1$
		and all cyclic shifts.
		However, $b(x)$ has a factor $m_\ell(x)$ which does not divide  $h(x)$ and is a product of irreducible polynomials since
		the greatest common divisor of $x^n-1$ and $b(x)$ is
		$h_\ell(x) = \gcd(b(x), x^n-1) = x^{31} + x^{27} + x^{25} + x^{23} + x^{21} + x^{19} + x^{17} + x^{15} + x^{13} + x^9 + x^8 + x^5 + x^4 + 1$. Thus, the factor
		$m_\ell(x) = h_\ell(x) / h (x)$ divides $b(x)$ and any cyclic shift
		and thus, $b(x)$ belongs to a subcode.
		Since $h_\ell(x)$ is the parity check polynomial of
		the BCH$(63,31,d)$ code which is a supercode of $\codename{2}$ and
		with $b(x)$ this supercode is checked which has a smaller minimum distance. 
	\end{example}

	The number 127 is prime, hence for this codelength, all corresponding cosets (except for $K_0$) have the same cardinality. This increases the number of possible combinations to achieve a certain dimension, however, it decreases the possible choices of dimensions since $|\mathcal M|$ or $|\mathcal M|-1$ is divisible by $7$.
	
	\begin{example}[BCH$(127,64,d)$ Code]\label{bch127_64}
		The cyclotomic cosets corresponding to $n = 127$ are $K_i, i \in \{0, 1, 3, 5, \\ 7, 9, 11, 13, 15, 19, 21, 23, 27, 29, 31, 43, 47, 55, 63\}$. Since all cyclotomic cosets but $K_0$ contain $7$ numbers, any combination of 9 of these will result in a BCH code with $k = 64$. In total, $\binom{18}{9} = 48\,620$ different codes can be constructed. Here we select four codes to be examined in detail and their parameters (same notation as in the previous examples) are given in Table~\ref{table_chosenCodes_params_127_64}.
		\begin{table}[h!]
			\centering
            \caption{Parameters of selected codes with $n = 127$ and $k = 64$}\label{table_chosenCodes_params_127_64}
			\begin{tabular}{l|l|l|l|l|l|l}
				$\mathcal{C}$ & $\delta$ & $\delta^\perp$ & $L$ & $d$ & $d^\perp$ & $\{i : K_i \subset \mathcal{M}\}$\\\hline\hline
				$\codename{1}$ & 21 & 20 & 28 & 19 & 8 & $\{1, 3, 5, 7, 9, 11, 13, 15, 63\}$\\
				$\codename{2}$ & 20 & 20 & 119 & 13 & 12 & $\{1, 3, 5, 7, 9, 11, 23, 29, 43\}$\\
				$\codename{3}$ & 21 & 22 & 1\,590 & 21 & 8 & $\{1, 3, 5, 7, 9, 11, 13, 15, 19\}$\\
				$\codename{4}$ & 15 & 16 & 651 & 15 & 16 & $\{1, 3, 5, 7, 9, 11, 13, 19, 21\}$\\
			\end{tabular}
		\end{table}
	\end{example}
	
	\subsection{Equivalence of RM and Extended BCH Codes}\label{subsec:equivalence}
	
	Reed-Muller codes are denoted by $\mathcal{R}(r,m)$ where $r$ is the order and $n=2^m$ is the length.
	The dimension is $k=\sum_{i=0}^r {m \choose i}$ and the minimum distance is $d=2^{m-r}$, see \cite{Boss,McWSl}.
	RM codes can be described by various constructions \cites{Schnabl, RmPolar_Arikan, RmPolar_Mondelli}.
	In the following, we will define RM codes as permuted extended BCH codes. 
	A proof can be found in \cite[Ch. 13,\S 5, Th. 11]{McWSl}, where the explicit permutation is not given. 
	However, we give the explicit permutation and a proof that RM codes punctured by one position are cyclic in the appendix. 
	For a BCH code of length $n=2^m-1$ corresponding to a RM code we select a cyclotomic coset if the weight of the binary number $a_0 \ldots a_{m-1}$ representing its index $i$
	is between $0$ and $m-r$.
	Formally, the set $I$ is
	\begin{equation}\label{eq:coset_choice_RM}
	I=\left\{i=\sum_{j=0}^{m-1}a_{j}\cdot 2^j: 0 <\sum_{j=0}^{m-1}a_{j} < m-r, \ a_i \in \{0,1\}\right\}.
	\end{equation}	
	The union of cyclotomic cosets is $\mathcal M = \cup_{i\in I} K_i$. In other words, the factors of $g(x)$
	are the irreducible polynomials $m_i(x), i \in I$.
	\begin{example}[$\mathcal{R}(2,6)$ and  BCH$(63,22,15)$]
		The code $\codename{4}$ from Example~\ref{ex:rmcode} corresponds to the punctured Reed-Muller code $\mathcal{R}(2,6)$ since the binary representation  of the number of the selected cyclotomic cosets $1 \leftrightarrow 00001$,
		$3 \leftrightarrow 00011$,
		$5 \leftrightarrow 00101$,
		$7 \leftrightarrow 00111$,
		$9 \leftrightarrow 01001$,
		$11 \leftrightarrow 01011$,
		$13 \leftrightarrow 01101$,
		$21 \leftrightarrow 10101$
		have all weight larger than $0$ and less than $m-r=4$. 
		The cyclotomic coset
		$15 \leftrightarrow 01111$ was not selected since the weight is $4$.
	\end{example}

	The BCH code  $\codename{4}$ from Example~\ref{bch127_64} is equivalent to the punctured Reed-Muller code $\mathcal{R}(3,7)$. It can be observed that the minimum distance is smaller than those of the other codes.
	
	Let  $(c_{0},\ldots,c_{n-1})$ be a codeword of the BCH code. 
	The permutation of the codeword coordinates to obtain a codeword of the RM code is given by 

		\begin{equation}\label{eq:permutation_func} 
	\pi(i) = \log_\alpha \left( \sum_{j=0}^{m-1}\alpha^{j}\cdot a_j \right)\;,
	\end{equation}
	where $i= \sum_{j=0}^{m-1}2^{j}\cdot a_j,\; i =1 \ldots n, \ a_j \in \mathbb F_2$ and $ \log_\alpha (\alpha^\ell) = \ell$.
	Further, a parity bit $p=\sum_{i=0}^{n-1}c_i$ has to be prepended and the codeword of the $r$-th order Reed-Muller code is
	\begin{equation}\label{eq:extended_cw}
	\left(p,c_{\pi(1)},\ldots,c_{\pi(2^m-1)} \right) \in \mathcal{R}(r,m)\;.
	\end{equation}
	
	\begin{example}[Permutation for $\mathcal{R}(1,3)$ and BCH$(7,4,3)$]
		According to (\ref{eq:coset_choice_RM}), the only chosen cyclotomic coset is $K_1$, since
		$I=\left\{2 i_1+i_0: 0 < i_0+i_1 < 2\right\}=\left\{1,2\right\}$ 		
		and $K_1$ includes $1$ and $2$.
		Using $p(x)=x^3+x+1$ as primitive polynomial of $\mathbb F_{2^3}$, the 
		generator polynomial of the BCH code is $g(x) = x^3+x+1$.
		With the same $p(x)$  the permutating function according to (\ref{eq:permutation_func}) 		can be calculated, for example consider
		\begin{align*}
		\pi(5)&=\pi(1\cdot2^2+0\cdot2^1+1\cdot2^0)\\
		&=\log_\alpha(1\alpha^2+0\alpha^1+1\alpha^0)\\
		&=\log_\alpha(\alpha^6)=6\;.
		\end{align*}		
		Using (\ref{eq:extended_cw}), the codewords of $\mathcal{R}(1,3)$ are given by 
		$\left(p,c_{0},c_{1},c_{3},c_{2},c_{6},c_{4},c_{5}\right)$ where		
		$p=\sum_{i=0}^{6}c_i$. 
	\end{example}

	In general, the special choice of cyclotomic cosets is suboptimal with respect to maximizing the designed distance.
	Therefore, the minimum distance of a RM code is often smaller than 
	the true minimum distance of a comparable extended BCH code.
	An example for this statement are the BCH codes $\codename{3}(127,64,21)$ and $\codename{4}(127,64,15)$ from Example~\ref{bch127_64}, where the latter corresponds to an RM code.
	This observation is especially interesting due to the close relation of RM and polar codes \cites{RmPolar_Arikan, RmPolar_Mondelli}.

	\subsection{Calculation of A Systematic Basis}\label{subsec:sytematic}
	Let $g(x)$ be the generator polynomial of a BCH$(n,k,d)$ code. The degree of $g(x)$ is $n-k$.
	One possible generator matrix $\mathbf G$ consists of the $k$ rows $x^j g(x), j =0, \ldots, k-1$
	and this is a $k \times n$ matrix.
	Let $ S \subset \{0,1, \ldots, n-1 \}$ be a set of $k$ distinct positions.
	If the submatrix $\mathbf G_S$, which consists of the columns $\ell \in S $, has full rank
	$k$ then $S$ is called information set and these positions uniquely determine the remaining $n-k$ positions of any codeword. Since  $\mathbf G_S$ has full rank, by 
	Gaussian elimination of  $\mathbf G$ it is possible to calculate the $k \times n$  matrix $\mathbf G_I$
	which has the unity matrix at the positions $S$.
	It is known that for BCH codes not any $k$ positions are an information set. 
	Which $k$ positions form an information set is a code property and not a property of the chosen generator matrix.
	
	If the submatrix $\mathbf G_S$ does not have full rank, the linear dependent
	columns must be substituted by other columns of $\mathbf G$ which are linearly independent.
	Since $\mathbf G$ has rank $k$, a $\mathbf G_S$ with full rank can always be found.
	
	Cyclic codes have the property that any $k$ consecutive positions $ S_c$ are an information set.
	Let $ S_c = \{k_0,k_0+1, \ldots, k_0+k-1 \mod n \}$. 
	These position can be cyclically shifted by $x^j, j=n-k-k_0$ such that
	$ S_s = \{n-k, \ldots, n-1 \}$. Encoding of the information symbols
	$c_I(x)= c_{n-k}x^{n-k} + \ldots + c_{n-1}x^{n-1}$  is done by calculating the remainder $c_R(x)$  
	when dividing $c_I(x)$ by
	the generator polynomial $g(x)$ denoted by $\mathrm{rem} (g(x),c_I(x))$. The codeword is then 
	$c(x)=c_I(x)+c_R(x)$.
	Obviously, the remainder has degree $< n-k$ and its $n-k$ coefficients are the redundancy 
	part of the codeword. If necessary, one can shift back $c(x)$ such that the information
	positions are at  $\{k_0,k_0+1, \ldots, k_0+k-1 \mod n \}$ again.
	
	In case of an information set of consecutive positions we can calculate the generator matrix $\mathbf G_C$
	which consists of the $k$ rows 
	\begin{equation}
	c_\ell (x) +x^\ell, \ \ell= n-k, \ldots, n-1\;,
	\end{equation}
	where $c_\ell (x)=\mathrm{rem}(g(x),x^\ell)$ is the remainder of the division $x^\ell :g(x)$.
	This matrix $\mathbf G_C$ has the form 
	\begin{equation}\label{eq:sysmatrix}
	\mathbf G_C = (\mathbf G_R | \mathbf I_k)\;,
	\end{equation}
	where $\mathbf I_k$ is the $k \times k$ identity matrix.

	\section{Decoding Based on the Dual Code}\label{sec:decdual}
	
	In case of a binary symmetric channel (BSC), we receive $r(x)=c(x)+e(x)$ where $e(x)=x^{e_0} +  x^{e_1} + \ldots + x^{e_{\tau-1}} $ is the error polynomial of weight $\tau$
	and $c(x)$ is the transmitted codeword. We can calculate
	\begin{align}\label{eq:rectimesdual}
	r(x) c^\perp(x)&=   (c(x)+e(x)) c^\perp(x) \nonumber\\
	&=  e(x) c^\perp(x)\mod (x^n-1) \;,
	\end{align}
	where we have used (\ref{eq:cwtimesdual}). Since the dual code is cyclic, we can choose a dual codeword $c^\perp(x)=b(x)$ of minimal weight $\delta^\perp$
	of the form $b(x)=1 + x^{b_1} + \ldots + x^{b_{\delta^\perp-1}}$.
	We calculate $w(x)=r(x) b(x)\mod (x^n-1)$ 
	which is equal to 
	$w(x)= e(x) b(x)\mod (x^n-1)$, according to (\ref{eq:rectimesdual}). The polynomial $w(x)$  can be written as sum of cyclic shifts of $e(x)$, namely
	\[	
	\begin{array}{rcl}
	w(x)
	&=& x^{e_0} +  x^{e_1} + \ldots + x^{e_{\tau-1}} +\\
	& & x^{e_0+ b_1} +  x^{e_1 + b_1} + \ldots + x^{e_{\tau-1} +b_1} +\\
	&  & \vdots\\
	& & x^{e_0+ b_{\delta^\perp-1}} +  x^{e_1 + b_{\delta^\perp-1}} + \ldots + x^{e_{\tau-1} +b_{\delta^\perp-1}}\;, 
	\end{array}
	\]
	where the exponents $ e_i + b_j$ are calculated$\mod n$. 
	Every coefficient of $w(x)$ equal $1$ is either an error  $x^{e_i}$ or a shifted
	error  $x^{e_i+b_j}$. The $1$ could also come from an addition of an error and an even
	number of shifted errors. 
	Similarly, for coefficients equal $0$ the position is error free and also no shifted error at this position or an error and an odd number of shifted errors are added, or an even 
	number of shifted errors is added.

	The main idea \cite{arxivboss} is to shift back  
	the shifted errors to their original positions using
	$x^{-b_i} w(x) \mod (x^n-1), i=0, \ldots, \delta^{\perp}-1$.   
	Recall that $x^{-b_0} w(x) =w(x)$. 
	Then it is counted how often position $j$ is 
	equal to $1$ 
	in all $\delta^\perp$ shifts of $w(x)$. This number is called $\Phi_j$ and calculated by
	\begin{equation}
	{ \Phi_j} =  \sum\limits_{i \in \mathrm{supp} \, b(x)} \  w_{(j+i)\bmod n},\ j=0, 1, \ldots, n-1\;. 
	\end{equation}
	Obviously, using a polynomial $b'(x)$ which is a cyclic shift of $b(x)$ where the
	zero coefficient is $1$ 
	would result in the same $\Phi_j$. Therefore, we can use cyclically different polynomials only.
	In case of $L$ cyclically different minimal weight polynomials $b^{(\ell)}(x), \ell=0, \ldots, L-1$
	we can do this counting for each of these and add them.
	The calculation of $\Phi_j$ using only the coefficients of $r(x)$ is  
	\begin{equation}\label{eq:phi_coeff}
	\Phi_j= \sum\limits_{\ell=0}^{L-1} \sum\limits_{i=0}^{\delta^\perp-1}  \left(\sum\limits_{l=0}^{\delta^\perp-1}
	r_{(j+ b^{(\ell)}_i - b^{(\ell)}_l)\bmod n}\bmod 2 \right).
	\end{equation}
	
	\subsection{Inherent Reliability Information}
    In order to explain intuitively that $\Phi_j$ contains information on the reliability of position $j$, we calculate the expected value of the weight $\omega$ of $w(x)$ and follow the ideas in \cite{ferdi}. 
    Interested readers find a more detailed stochastic analysis in \cite{yuan21}.
	The coefficient $w_i$ is $1$ if an odd number of the $\tau$ error positions
	overlap with the $\delta^\perp$ nonzero positions of $b(x)$.
	However, we only can calculate how often this happens when considering all possible
	errors of weight $\tau$.
	This number $W \in \mathbb N$ depends only on $ \tau, \delta^\perp$, and $n$ and is
	\begin{equation}
	W=  { \delta^\perp \choose 1} {n- \delta^\perp \choose \tau- 1} +{ \delta^\perp \choose 3} {n- \delta^\perp \choose \tau- 3} + \ldots +q\;,
	\end{equation}
	where 
	\[
	q= \left\{ \begin{array}{ll}
	{ \delta^\perp \choose \tau} {n- \delta^\perp \choose 0}, & \tau \ \mathrm{odd}\\
	{ \delta^\perp \choose \tau-1} {n- \delta^\perp \choose 1}, & \tau \ \mathrm{even}\;.\end{array}
	\right.
	\]
	Thus, the expected contribution of one error of weight $\tau$ to one
	coefficient $w_i$ is
	$W/ {n \choose \tau}$. Therefore, the expected
	weight $\omega$ of $w(x)$ is
	\begin{equation}\label{DA}
	E[\omega] =  \frac{n W}{{n \choose \tau}}\;.
	\end{equation}
	Now, we follow the ideas in \cite{arxivboss} in order to calculate the expected values for 
	$\Phi_j$ in case of an error and a non-error position, denoted by $E[\Phi_e(\tau)]$ and  $E[\Phi_c(\tau)]$, respectively. 
	At an error position $j$,
	the frequency of occurrence $\Phi_j$
	has the expected  value
	\begin{equation}\label{eq:expected_phierror}
	E[\Phi_e(\tau)] =  \frac{E[\omega]}{\tau} L\;.
	\end{equation}
	Since the expected weight $\omega$  is for  $\tau$ errors 
	for each 
	polynomial $b^{(\ell)}(x)$, we have to divide by $\tau$ and multiply by $L$
	because we have $L$ different polynomials.
	For a non-error position the expected value of  $\Phi_j$
	is
	\begin{equation}\label{eq:expected_phicorrect}
	E[\Phi_c(\tau)] =   \frac{E[\omega] (\delta^\perp-1)}{n-\tau}   L\;.
	\end{equation}
	An error position contributes to $\delta^\perp-1$ positions of the $n-\tau$ non-error positions. Thus,
	we multiply the expected weight by  $\delta^\perp-1$ and divide by the number of non-error positions.
	Again, we have $L$ polynomials.
	\begin{figure}[h]
		\centering
		\subfloat[Expected and average values of $\Phi$ dependent on $\tau$
		\label{fig_ValsOfPhi_Hard}]
		{\includegraphics[]{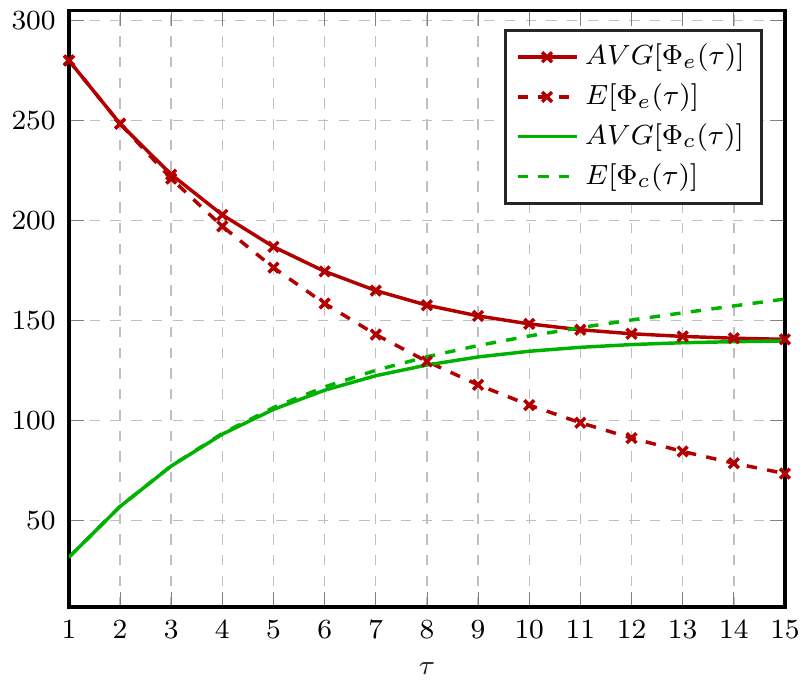}}
		\hfil
		\subfloat[Statistics of non-error and error positions
		\label{fig_ErrorStatistics_63-24}]
		{\includegraphics{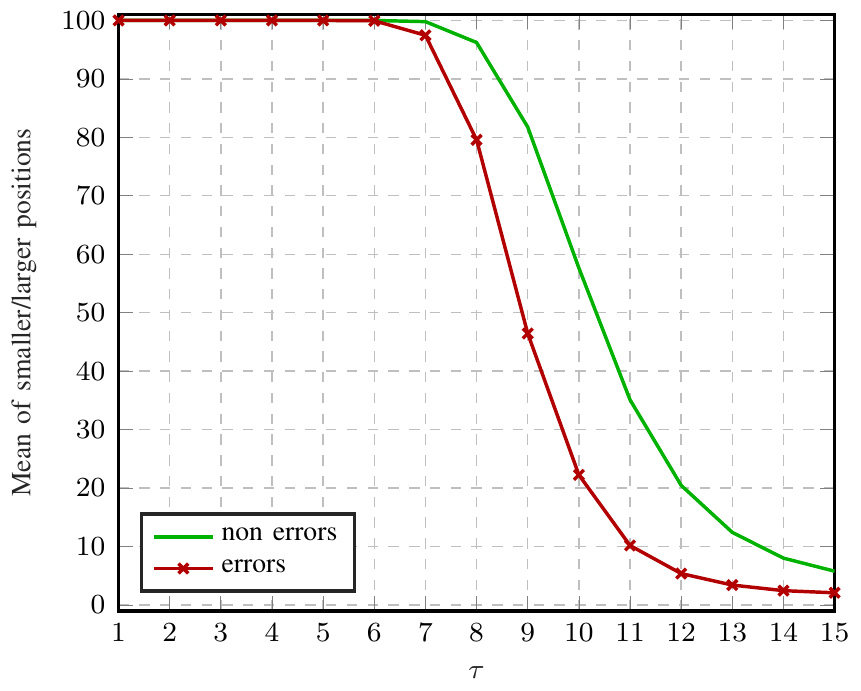}}
		\caption{Aspects of $\Phi$ for BCH$(63,24,15)$}
	\end{figure}
	
	\begin{example}[Comparison of  Expected Values and Simulated Averages]
		We use the $35$ different dual codewords of minimum weight $\delta^\perp = 8$ of BCH$(63,24,15)$. The simulation uses 20\,000 random errors of weight $\tau$ and calculates the corresponding values of $\Phi$. The calculated values together with simulation results are shown in Fig.~\ref{fig_ValsOfPhi_Hard}. The mean of all occurred values for error positions is $AVG[\Phi_e(\tau)]$. The expected weight $E[\Phi_e(\tau)]$ is calculated according to (\ref{eq:expected_phierror}). Respectively, $AVG[\Phi_c(\tau)]$ gives the average values of $\Phi(\tau)$ for correct positions, whereas $E[\Phi_c(\tau)]$ is predicted by (\ref{eq:expected_phicorrect}) and gives the expected value for $\Phi(\tau)$ for non-error positions. For $\tau < 6$ the smallest occurred value for an error position is larger than the largest value for a correct position. In this case, erroneous and correct values are clearly separated.
		It can be observed that the estimated values for non-error positions are 
		closer to the simulated averages than those for the
		error positions. This is due to the above-mentioned effect that a $w_j=1$, 
		which corresponds to a shifted error at one position can, in addition, be 
		shifted to other error positions.
		
	\end{example}
	
	The values of $\Phi_j$ contain inherent reliability information 
	because the smaller values are more likely error free and the larger values are 
	more likely erroneous. 
	This implies decoding concepts for different
	decoding strategies based on the unreliable positions 
	or based on the reliable positions or based on both.

	\begin{example}[Distribution of Error and Non-Error Values of $\Phi$]\label{ex:err_distr}
		According to the description in Section~\ref{sec:decdual}, higher reliability is indicated by a smaller value of $\Phi$ of the corresponding position.
		A smaller reliability corresponds to a larger value. 
		For a decoding concept based on reliable or based on unreliable positions,
		it is interesting how many error positions have a smaller reliability than all correct positions or how many correct positions have a larger reliability than all error positions.
		Fig.~\ref{fig_ErrorStatistics_63-24} shows the answer to this question
		for the same code BCH$(63,24,15)$ that was used before. 
		For each error weight, we simulated $20\,000$ errors and examined the values of $\Phi$ corresponding to error and non-error positions. The red curve corresponds to the average percentage of error positions that show a smaller reliability than all correct positions. Respectively, the average percentage of correct positions with larger reliability than all errors is depicted in green. For $\tau < 7$ both curves yield values of $100\%$, thus in these cases the $\tau$ error positions correspond to the $\tau$ largest values of $\Phi$. With increasing error weight the values of both curves decrease. Still, the non-error positions correspond on average in $20-30\%$ more cases to the largest reliability values than it is the case for error positions and the smallest reliabilities.
	\end{example}

	Before we describe the decoding concepts we will show that the measure $\Phi_j$ 
	can be related to known decoding methods.
	
	\subsection{Relation to Known Decoding Concepts}
	There exist two classical decoding methods based on codewords of the dual code.  
	The bit flipping decoding \cite{Gallager}, introduced for low-density 
	parity-check codes
	and the majority logic decoding  \cite{Massey} for arbitrary codes.
	Further, the concept of extrinsic information is also based
	on codewords of the dual code.
	For short, we will call them BiFl, MaLo, and ExIn, respectively.
	We will follow the derivation in \cite{arxivboss}. 
	Given the dual codeword $b(x)=1 + x^{b_1} + \ldots + x^{b_{\delta^\perp-1}}$ the vector  $\mathbf p$ with support $\{n-1,n-1-b_1, \ldots, n-1-b_{\delta^\perp-1}\}$ is a codeword of the dual code. The reverse operation
	relates the polynomial multiplication to the scalar product (compare (\ref{eq:phi_coeff})). 
	The scalar product of $\mathbf p$ with the received vector $\mathbf r$ is called a parity check  
	$\langle \mathbf p, \mathbf r \rangle =  r_{n-1}+ r_{n-1-b_1}+ \ldots
	+ r_{n-1-b_{\delta^\perp-1}}$
	and results in $0$ or $1$.
	Each coefficient $w_i$ of $w(x)=r(x) b(x)$ corresponds to the scalar product of the 
	shifted dual codeword $\mathbf p$ with $\mathbf r$, 
	for example $w_{n-1}= r_{n-1}+ r_{n-1-b_1}+ \ldots + r_{n-1-b_{\delta^{\perp-1}}}$.
	By cyclic shifting the  $L$ codewords, we can create  $n L$
 minimum weight codewords, and thus,  $\mathbf p_t, t =0, \ldots, nL-1$ parity checks. 
	Clearly, the $nL$ coefficients of $w^{(\ell)}(x), \ell =0, \ldots, L-1$ are the results of the 
	parity checks $\langle \mathbf p_t, \mathbf r \rangle, t =0, \ldots, nL-1$.
	
	\paragraph{Relation to MaLo}
	In MaLo decoding, each position $j$ is decoded separately
	using a set of parity checks ($\mathbf p_i, i=1, \ldots, J$) with 
	the following property \cite{Boss}:
	at position $j$ all checks are $1$ and at
	all other positions  only one of the checks has a $1$. 
	In other words $\cap_i \mathrm{supp}(\mathbf p_i )=j$.
	If the majority of the checks is 
	not fulfilled ($=1$), position $j$ is decided to be erroneous. 
	Note that for $\mathrm{wt} (\mathbf p_i)= \delta^\perp$ it follows that the number $J$ of checks which can exist, is bounded by $J\leq \lfloor (n-1)/(\delta^\perp-1) \rfloor$.
	If we restrict the calculation of $\Phi_j$ by using only those polynomials
	$b^{\ell}(x)$ with $b^{\ell}_j=1$ and all the other coefficients are only $1$
	in one of the polynomials, this is identical
	to MaLo decoding. 
	Therefore, the calculation of $\Phi_j$ is an extension of MaLo by using more checks for 
	position $j$ and violating the restriction that the other positions
	should have ones at disjoint positions.
	Since we use all cyclic shifts of the $L$ polynomials $b^{\ell}(x)$
	for any position $j$ there are  $\delta^\perp L$ checks  with a
	$1$ at this position. However, the other positions are 
	$1$ in more than one check, since $\delta^\perp L> \lfloor (n-1)/(\delta^\perp-1) \rfloor$. 
	The $\Phi_j$ represents the voting if position $j$ is an error or not.
	Using more checks for this voting about position $j$ turns out as an advantage.

	\paragraph{Relation to BiFl}
	In BiFl position $j$ in the received vector $\mathbf r$ is flipped, resulting in the vector 
	$\mathbf r_j$ and 
	the scalar products $\langle \mathbf p_t, \mathbf r \rangle$  are compared
	with $\langle \mathbf p_t, \mathbf r_j \rangle$. The position $j$
	for which the number of scalar products with result
	$1$ is reduced most is considered as error.
	Flipping position $j$ is the addition of $x^j b^{(\ell)}(x) \mod (x^n-1)$ to $w^{(\ell)}(x)$. 
	Note that this is equivalent to the correlation
	between $x^j b^{(\ell)}(x) \mod (x^n-1)$ and $w^{(\ell)}(x)$. 
	In order to measure the change in the scalar products for BiFl we define the value $\Delta_j$ by
	\begin{equation}\label{delta}
	\Delta_j = \sum\limits_{\ell=1}^L \mathrm{wt} (w^{(\ell)}(x) + x^j b^{(\ell)}(x)) - \mathrm{wt} (w^{(\ell)}(x))\;.
	\end{equation}
	\begin{lemma}[\cite{arxivboss} Relation between $\Phi$ and $\Delta$]\label{bfmalo}
		\[L \delta^\perp - 2\Phi_j= \Delta_j\;.\]
	\end{lemma}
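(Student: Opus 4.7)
The plan is to analyze each summand in the definition of $\Delta_j$ by looking at it position by position, and show that the weight change caused by adding $x^j b^{(\ell)}(x)$ to $w^{(\ell)}(x)$ is determined exactly by the count $\Phi_j$ contributes for index $\ell$.

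First I would fix an $\ell$ and let $S_\ell = \mathrm{supp}(x^j b^{(\ell)}(x) \bmod (x^n-1))$. Since cyclic shift is a bijection on $\{0,\dots,n-1\}$, we have $|S_\ell| = \mathrm{wt}(b^{(\ell)}(x)) = \delta^\perp$, and $S_\ell = \{(j+i)\bmod n : i \in \mathrm{supp}(b^{(\ell)})\}$. Outside $S_\ell$ the two polynomials agree, so only coefficients indexed by $S_\ell$ contribute to the weight difference. Over $\mathbb{F}_2$, adding a $1$ to coefficient $k \in S_\ell$ flips it: if $w^{(\ell)}_k = 0$ the weight increases by $1$, if $w^{(\ell)}_k = 1$ the weight decreases by $1$.

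Next I would set $a_\ell = \sum_{i \in \mathrm{supp}(b^{(\ell)})} w^{(\ell)}_{(j+i)\bmod n}$, i.e., the number of positions in $S_\ell$ where $w^{(\ell)}$ already has a $1$. Then the weight change for this $\ell$ is
\begin{equation*}
\mathrm{wt}(w^{(\ell)}(x) + x^j b^{(\ell)}(x)) - \mathrm{wt}(w^{(\ell)}(x)) = (\delta^\perp - a_\ell) - a_\ell = \delta^\perp - 2a_\ell.
\end{equation*}
Summing over $\ell = 0,\dots,L-1$ gives $\Delta_j = L\delta^\perp - 2\sum_\ell a_\ell$, and by the definition of $\Phi_j$ (as recalled in the equation for $\Phi_j$ above) we have $\sum_\ell a_\ell = \Phi_j$, which yields $L\delta^\perp - 2\Phi_j = \Delta_j$.

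There is no serious obstacle here; the argument is essentially bookkeeping. The only thing to be careful about is the cyclic reduction modulo $x^n - 1$: one must verify that the shift $x \mapsto x^j$ on supports is a genuine bijection modulo $n$ so that $|S_\ell| = \delta^\perp$ exactly (no support collisions), and that the two representations of $\Phi_j$, namely the one in terms of $w^{(\ell)}$ coefficients and the one in terms of $r(x)$ coefficients given in the paper, agree on this cyclic indexing. Both are immediate but worth stating explicitly so the identity $\sum_\ell a_\ell = \Phi_j$ is unambiguous.
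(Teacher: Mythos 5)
Your proof is correct and follows essentially the same route as the paper's: restrict attention to the $\delta^\perp$ positions in the support of $x^j b^{(\ell)}(x)$, observe that each one there contributes $-1$ and each zero contributes $+1$ to the weight change, so the per-$\ell$ change is $\delta^\perp - 2a_\ell$, and sum over $\ell$ using $\sum_\ell a_\ell = \Phi_j$. Your added remarks about the cyclic shift being a bijection and the consistency of the two representations of $\Phi_j$ are harmless extra care but do not change the argument.
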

	\begin{proof}
		We use only one polynomial $w(x) = e(x) b(x)\mod (x^n-1)$.
		The value  of $\Delta_j$ only depends on the non-zero positions of 
		$x^j b(x)$ which are $\mathcal J =\{j, j+b_1, j+ b_2, \ldots, j+ b_{\delta^\perp-1}\}$. 
		If the value of $w_i,\  i \in \mathcal J$ is $1$, $-1$ is added to $\Delta_j$ and if the value is $0$,
		$1$ is added. 
		$\Phi_j$ is given by $\Phi_j= \sum_{j \in \mathcal J} w_j$ , where $w(x)$ is shifted by $0,-b_1, \ldots, -b_{\delta^\perp-1}$
		and corresponds to the number of ones in these $\delta^\perp$ positions.
		The number of zeros is then $\delta^\perp -\Phi_j$ and 
		$\Delta_j$ is the number of zeros minus the number of ones, 
		thus, $\Delta_j =  (\delta^\perp -\Phi_j) - \Phi_j $.
		This is valid for each of the $L$ polynomials, which completes the proof.
	\end{proof}
	
	The concept of low-density parity-check codes is to create
	minimal weight dual codewords. However, only $n-k$ codewords are used as checks
	to get a parity check matrix where $n-k$ is in general much smaller than $nL$.
	
	\paragraph{Relation to ExIn}
	Assume the check $r_j + r_{t_1} + \ldots +r_{t_{\delta^\perp-1}}=s_j \in \{0,1\}$.
	With this check equation we can calculate the information of $\delta^\perp-1$ positions
	about the remaining position. 
	This information of the positions $t_1, \ldots, t_{\delta^\perp-1}$ about position $j$ is 
	called extrinsic information. Note that the name extrinsic information is also used
	in a different meaning assuming statistical independence.
	Here we use extrinsic information to describe the collection
	of the information of various combinations of other positions about one position. This exploits the weak law of large numbers
	rather than statistical independence.
	$\Phi_j$ can be interpreted as the extrinsic information for position $j$.  
	The calculation of $\Phi_j$ uses $L \delta^\perp$ checks, therefore  $n(L \delta^\perp)$ checks
	are used for all $n$ positions. There are $n L$ checks for BiFl, however, for each check, $\delta^\perp$ extrinsic informations are calculated which results in  $\delta^\perp(n L)$ checks
	which is the same number.
	Using the extrinsic information to update the reliabilities of positions is known as belief propagation which can be done in the same way with the $\Phi_j$.

	\subsection{Decoding Concepts}
	At the receiver, we calculate $\Phi_j, j=0, \ldots, n-1$ and sort 
	the positions according to the value  $\Phi_j$,  which corresponds to their reliability.
	The order $\Phi_{i_0} \leq \Phi_{i_1} \leq \ldots \leq \Phi_{i_{n-1}}$
	assigns the index set $ S =\{ i_0, i_1, \ldots, i_{n-1}\}$.
	The most reliable position is $i_{0}$ and the most unreliable is  $i_{n-1}$.
	Here we restrict to a BSC and hard decision decoding.
	Based on these reliabilities we can use the unreliable positions and reduce the number of errors iteratively by flipping these positions. 
	However, we also can use the most reliable positions as an information set and it is also possible to use 
	both, reliable and unreliable positions.

	The first concept is called {\em error reduction decoding} and is an iterative method. In the first step the 
	most unreliable position is flipped. 
	Note that the values  $\Phi_j$ have not to be sorted here but only the position $max$ with the maximum value $\Phi_{max}$ must be determined.
	After flipping this position in the received polynomial the values $\Phi_j$ are recalculated.
	Then, again, the most unreliable position is flipped and the $\Phi_j$ are recalculated and so on.
	The decoding is either stopped when a valid codeword is found or when a maximum number of iterations is reached. In the latter case, a decoding failure is declared.
	
	In {\em information set decoding} the positions $i_0, \ldots, i_{k-1}$
	are used to calculate a systematic basis according Section~\ref{subsec:sytematic}.
	If these positions contain linear dependent columns the positions 
	$i_k, i_{k+1}, \ldots$ are used until a systematic basis 
	$\mathcal B = \{\ell_0, \ell_1, \ldots, \ell_{k-1}\}$ is found.
	Then encoding with the received bits $r_{\ell_0}, r_{\ell_1}, \ldots, r_{\ell_{k-1}}$ 
	as information is done. Here a decoding failure is not possible.
	This can be extended to a list decoder by flipping each information bit $r_{\ell_i}, i=0, \ldots, k-1$ and encoding again. The list size is then $k$ and the codeword which has the smallest Hamming distance to the received polynomial is chosen.
	In case of several codewords with the same distance, one of them is chosen at random.
	The list size could possibly be enlarged by including flips of the unreliable positions of the basis.
	Note that the inherent reliability of the 
	$\Phi_j$ makes information set decoding possible.
	Without this reliability, only a trial and error search for an error free systematic basis could be done.
	
	The third decoding concept is the  
	{\em redundancy set decoding}.
	It uses the fact that for cyclic codes any $k$ coherent positions are systematic positions
	as shown in Section~\ref{subsec:sytematic}. 
	We will use the positions $n-k, \ldots, n-1$ 
	as systematic positions in the following.
	Let $\overline r (x) = c(x) + e(x)$ be received. The addition of a codeword $c_w(x)$ 
	to  $\overline r (x) $ does not change the error positions.
	If we choose the information part $c_I(x)= \overline r_{n-k} x^{n-k}+ \ldots + \overline r_{n-1} x^{n-1}$
	and the redundancy part $c_R(x) = \mathrm{rem}(g(x), c_I(x))$
	then $r(x) = \overline r (x) + c_w(x) = r_0 + r_1 x + \ldots + r_{n-k-1} x^{n-k-1}$.
	The $k$ positions $r_{n-k} = r_{n-k+1} = \ldots =r_{n-1}=0$ are zero, 
	however, some of these positions may be erroneous. We will illustrate all 
	steps of this new decoding concept with examples.
	\begin{example}[BCH$(15,7,5)$]
		The generator polynomial is $g(x)=x^8 + x^7 + x^6 + x^4 + 1 $ 
		and the parity check polynomial is $h(x)= x^7 + x^6 + x^4 + 1$. 
		The dual code has only one cyclically different minimal weight polynomial $b(x)= x^{11} + x^3 + x^2 + 1$. 
		In fact we could use also other cyclic shifts of this polynomial
		for which the coefficient at $x^0$ is $1$. 
		In particular $x^4 b(x) = h(x) \mod (x^n-1)$.
		Let the codeword $c(x)= x^{14} + x^{12} + x^{11} + x^{10} + x^{9} + x^{6} + x^{4} + x^{3} + x $ 
		be transmitted and the error be 
		$e(x)= x^{14} + x^2+1$
		which has a weight beyond half the minimum distance.
		Then the received polynomial is 
		$\overline{r}(x) = x^{12} + x^{11} + x^{10} + x^{9} + x^{6} + x^{4} + x^{3} + x^{2} + x + 1$
		From the systematic part of the received polynomial $x^{12} + x^{11} + x^{10} + x^{9}$
		we calculate the codeword
		$c_w(x)=x^{12} + x^{11} + x^{10} + x^9 + x^7 + x^5 + x^4 + x$ 
		and then 
		$r(x)= x^7 + x^6 + x^5 + x^3 + x^2 + 1$.
	\end{example}

	Clearly, the $\Phi_j$ are identical when calculated with $\overline r (x) $ or $ r (x) $ 
	according to (\ref{eq:phi_coeff}) since they only depend on the error and the dual codewords.
	The value $\Phi_j$ corresponds to the reliability of position $j$.
	Let $\mathcal B = \{\ell_0, \ell_1, \ldots, \ell_{k-1}\}$ be the index set of the 
	sorted reliabilities of the positions $n-k, n-k+1, \ldots, n-1$ and $\ell_0$ is the
	most unreliable one which corresponds to the largest value $\Phi_j$. 
	The remaining $n-k$ positions $0,1, \ldots, n-k-1$ are also
	sorted as the index set  $\mathcal G = \{j_0, j_1, \ldots, j_{n-k-1}\}$ and 
	$j_0$ is the most reliable symbol. Note that the sets  $\mathcal B$ and $\mathcal G$
	are sorted in opposite order.
	The errors in the systematic part are more likely in the first positions of the set $\mathcal B$ whereas the first part of $\mathcal G$ is likely error free. 
	We take $\mu$ unreliable positions from the systematic 
	part and $\mu$ reliable from the redundancy part. 
	The choice of $\mu$ is influencing the decoding performance
	and here we choose $\mu \approx k/2$.
	The idea is that the unreliable positions of
	the systematic part contain all errors which are in this systematic part whereas the reliable positions of the
	redundancy part are error free.
	\begin{example}[$\Phi$ for BCH$(15,7,5)$]
		We calculate $w(x)=r(x) b(x)=  x^{14} + x^{13} + x^{11} + x^{10} + x^{5} + x^{4} + x^{3} + x^{2} + x + 1$.
		In addition, we need the three cyclic shifts by $-2,-3$ and $-11$ of $w(x)$. Now, we count the ones at the positions and 
		get the values
		
		\[\Phi= 
		\left(4,\,3,\,4,\,3,\,2,\,2,\,1,\,2,\,3,\,2,\,2,\,3,\,2,\,3,\,4\right)\;.
		\]
		The index sets are $\mathcal B =\{6, 0, 3, 5, 1, 2, 4\}$ which means the positions
		$\{14, 8, 11, 13, 9, 10, 12\}$, 
		and $\mathcal G= \{6, 4, 5, 7, 1, 3, 0, 2\}$.
		We choose $\mu=3$ and verify that the three positions of $\mathcal B$, $\{6, 0, 3\}$ contain 
		the error in the systematic part in position $14$.
		Further, the three positions of $\mathcal G$, $ \{6,4,5\}$ of the redundancy part contain no errors.
	\end{example}

	The matrix $\mathbf G_R$ according to (\ref{eq:sysmatrix}) has $k$ rows and $n-k$ columns. 
	From this matrix we extract a $\mu \times \mu$ square matrix $\mathbf D$
	by taking $\mu$ rows  $\ell_0, \ell_1, \ldots,\ell_{\mu-1}$ according
	the index set  $\mathcal B$ and  $\mu$ columns $j_0, \ldots,  j_{\mu-1}$
	according the index set  $\mathcal G$.
	Then from the received values a vector $\mathbf r_G=(r_{ j_{n-k-1}}, \ldots, r_{ j_{n-k-\mu}})$ is determined.
	Assuming that $\mathbf r_G$ is error free as well as the positions $\ell_\mu, \ell_{\mu+1}, \ldots,\ell_{k-1}$
	and $\mathbf D$ is invertible, then the vector $\mathbf \varepsilon = \mathbf r_G \mathbf D^{-1}$ 
	are the errors in the systematic part. 
	
	We consider a simple case, where only one error is in $\ell_0$ and all other systematic positions $\ell_1, \ldots,\ell_{k -1}$ are error free.
	Then the redundancy part $\mathrm{rem}(g(x), x^{\ell_0})$ should be identical to the redundancy 
	part $\mathbf r_G$ at those positions of $j_0, \ldots ,j_{n-k -1}$ which are error free.
	Because of linearity, this holds for any number of errors as long as all errors are in the 
	$\mu$ of the $k$ systematic positions.
	
	\begin{example}[Matrix for Decoding BCH$(15,7,5)$]
		The matrix $\mathbf G_R$ 
		\begin{align*}
		\mathbf G_R&=\left(\begin{array}{rrrrrrrr}
		1 & 0 & 0 & 0 & 1 & 0 & 1 & 1 \\
		1 & 1 & 0 & 0 & 1 & 1 & 1 & 0 \\
		0 & 1 & 1 & 0 & 0 & 1 & 1 & 1 \\
		1 & 0 & 1 & 1 & 1 & 0 & 0 & 0 \\
		0 & 1 & 0 & 1 & 1 & 1 & 0 & 0 \\
		0 & 0 & 1 & 0 & 1 & 1 & 1 & 0 \\
		0 & 0 & 0 & 1 & 0 & 1 & 1 & 1
		\end{array}\right)\\
		\end{align*}
		from which we select the three rows  $ \mathcal B = \{6,0,3\}$ and three columns
		$\mathcal G = \{6,4,5\}$ gives the matrix $\mathbf D$ and its inverse
		\begin{align*}
		\mathbf D=
		\left(\begin{array}{rrr}
		1 & 0 & 1 \\
		1 & 1 & 0 \\
		0 & 1 & 0
		\end{array}\right)	\;, \;
		\mathbf D^{-1}=
		\left(\begin{array}{rrr}
		0 & 1 & 1 \\
		0 & 0 & 1 \\
		1 & 1 & 1
		\end{array}\right)	\;.	
		\end{align*}
		The vector $\mathbf r_G=  \left(1,\,0,\,1\right) $
		and the error in the systematic part is $\mathbf e=(0,0,0,0,0,0,1)$.
		With this the decoded codeword is
		$\hat c(x) = c_w(x) + \mathrm{rem}(x^{14}, g(x)) + x^{14}$
		which is the transmitted one.
	\end{example} 

	Two extensions of this 
	decoding concept to list decoding are possible.
	The first one is to
	use cyclic shifts of the received polynomial $\overline r$.
	With these shifts
	any $k$ consecutive positions can be shifted to the consecutive systematic positions
	$n-k, \ldots, n-1$. With this shifted received polynomial $\overline r_s$
	the same decoding can be done. The $\Phi_j$ does not need to be calculated again 
	but only shifted accordingly. 
	However, a different codeword $c_w(x)$  has to be calculated and added
	and the sorting of the new sets $\mathcal B$ and $\mathcal G$ must be done again.   
	With these shifts, the Dirichlet principle (also known as pigeonhole principle or Dirichlet's box principle) is used implicitely. If $\tau$ 
	random errors are in $n$ positions, 
	the principle says that when splitting the $n$ positions into $\nu$ disjoint parts 
	there is at least one part which contains $\leq \tau/\nu$ errors.
	But, there is also at least one part which contains $\geq \tau/\nu$ errors. Therefore,
	we can shift the different parts to the fixed systematic positions and have different error
	constellations in the systematic part and in the redundancy part.
	
	The second possible extension to list decoding is the flipping  of 
	the used positions of the redundancy part which is the vector $\mathbf r_G$.
	For any of these vectors with flipped positions, 
	as in information set decoding, the multiplication
	$\mathbf \varepsilon = \mathbf r_G \mathbf D^{-1}$ must be calculated.
	
	\section{Hard Decision Decoding}\label{sec:hardec}
	In this section, we describe three different decoding algorithms
	and analyze their complexity.
	Then we show the decoding performance for selected BCH codes of
	different lengths and rates based on simulations.
Since for a BSC the probability that $\tau$ errors in $n$ positions occur can be calculated, we simulate random errors of weight $\tau$ and run the decoding algorithm. 
    Let $p_\tau$ be the probability that $\tau$ errors can not be decoded by the used decoding algorithm, where $p_\tau$ is determined by the simulations.
	Then the  word error rate (WER) can be calculated for a BSC channel
	with error probability $p$,
	using the 
	binomial distribution by 
	\begin{equation}
	\mathrm{WER}(p) =\sum\limits_{\tau=1}^{n} p_\tau {n \choose \tau} p^\tau (1-p)^{n-\tau}\;.
	\end{equation}
	The simulation results will be discussed and comparisons will be made.
	But first, we derive a maximum-likelihood (ML) bound, which is used 
	to relate the achieved performance to ML decoding.
	
	For the remainder of this section, a BSC with error probability $p$ is assumed.
	All the algorithms have been implemented in SageMath \cite{sage}.  
	
	\subsection{Maximum Likelihood Bound}
	Since ML decoding is not feasible for many codes, we use a lower bound
	from \cite[Th. 9.28, p. 384]{Boss}.
	The idea is based on the fact that in simulations we know the transmitted 
	codeword $c(x)$ as well as the error $e(x)$.
	In addition we have the decision $\hat c (x)$ of the used hard decision decoding algorithm
	when $r(x)=c(x) + e(x)$ was received. 
	An ML decoder decides for 
	the codeword with smallest Hamming distance $c_m(x) = \arg \min_c \dist{r,c}$.
	If $\dist{r, \hat c} \geq \dist{r, c}$ we assume that the ML decoder 
	decodes  correctly $c_m(x)=c(x)$, which is not guaranteed.
	If $\dist{r, \hat c} < \dist{r, c}$, we assume that the ML decoder makes a decoding error because it will find the same codeword or another one
	with the same or smaller distance.
	In addition if the outcome of the decoding algorithm is a failure and the decoder gives no estimation, we assume that the ML decoder would decode correctly.
	The calculated WER is therefore a lower bound and a ML decoder can not perform better.
	If the performance of the decoding algorithm is identical to the bound,
	it has the same performance as ML decoding. 
	
	This bound can be tightened in case of a 
	list-decoder when the output consists of multiple codewords possibly with equal distance to the received sequence.
	Let $\tau = \dist{r, c}$ be the true number of errors and $\hat{\tau}$ the minimum estimated number of errors.
	Furthermore, let the list $\mathcal{L}$ of codewords found by the decoder only contain codewords with corresponding error weight $\hat{\tau}$.
	Then, the number of error events which are relevant for the ML lower bound can be updated based on this decoding result by
	\begin{equation}
	\text{\#errors} = \begin{cases}
	0                                           &\text{if } \hat{\tau} > \tau \\
	1-\frac{1}{\left|\mathcal{L}\right|}   = \frac{\left|\mathcal{L}\right|-1}{\left|\mathcal{L}\right|}    &\text{if } \hat{\tau} = \tau \text{ and } c\in \mathcal{L} \\
	1-\frac{1}{\left|\mathcal{L}\right|+1} = \frac{\left|\mathcal{L}\right|}{\left|\mathcal{L}\right|+1}    &\text{if } \hat{\tau} = \tau \text{ and } c\notin \mathcal{L} \\
	1                                           &\text{if } \hat{\tau} < \tau\;. 
	\end{cases}
	\end{equation}
	The update rule for the cases with $\tau = \hat{\tau}$ is based on the following considerations.
	Let $s$ be the number of codewords with distance $\tau$ to the received sequence.
	Then, an ML decoder will choose a wrong codeword with probability $1-\frac{1}{s}$.
	While $s$ is unknown, it can be lower bounded by $\left|\mathcal{L}\right|$, i.e., the number of found codewords, which also lower bounds the error probability of the ML decoder.
	In the case that $c$ is not in the list, an even tighter bound can be obtained, as it is known that at least $\left|\mathcal{L}\right|+1$ codewords with distance $\tau$ exist.
	
	\subsection{Decoding Algorithms}
	For all decoding algorithms the reliability information $\Phi$ according to (\ref{eq:phi_coeff})
	has to be calculated.
	According to \cite{SCC2019,arxivboss}, this can be done by Algorithm~\ref{calcphi} where $L$ dual minimal weight codewords are used which are cyclically different and have a one at position $x^0$.
	
	\medskip
	\begin{algorithm}[h]
		\DontPrintSemicolon
		\SetKwInput{Input}{input}\SetKwInput{Output}{output} 
		\Input{$r$, cyclically different minimum weight checks $\mathcal{W}$}
		\Output{$\Phi$}
		\For{$b \in \mathcal{W}$}{
			$w = r \cdot b \mod x^n-1$\;
			\For{cyclic shifts $w_{shifted}$ of $w$}{
				\For{$exp \in \mathrm{exponents}(w_{shifted})$}{
					$\Phi_{exp} \pluseq 1$	
				}
			}
		}
		
		\caption{Calculation of $\Phi$}\label{calcphi}
	\end{algorithm}
	
	\medskip
	For binary codes, the polynomial multiplication is the XOR of $\delta^\perp$ cyclic shifts of the
	received vector. The complexity to calculate $\Phi$ is $L$ polynomial multiplications and $L \delta^\perp$ cyclic shifts. Then these $L \delta^\perp$ binary vectors are added.
	Each coordinate of $\Phi$ is an integer between $0$ and $L \delta^\perp$.

	\subsubsection{Error Reduction Decoding}
	The error reduction is an iterative decoding method, where in each iteration
	the unreliable positions according to $\Phi$ are flipped to calculate a new received vector,
	see Algorithm~\ref{eredec}.
	With this new received vector, $\Phi$ is recalculated which can be done by modifying
	$w^{(\ell)}(x)$ as follows.
	Let $j_1, \ldots, j_\mu$ be the flipped positions, 
	then $w^{(\ell)}_{new}(x)= w^{(\ell)}(x) + x^{j_1} b^{(\ell)}(x) + \ldots + x^{j_\mu} b^{(\ell)}(x)$.
	Therefore, the complexity is the number of iterations times the calculation of $\Phi$.
	In addition, a sorting operation in each iteration has to find the $\mu$ largest values
	of $\Phi$ which takes not more than  $\mu$ times $n$ comparisons.
	There exist several variants, which and how many unreliable positions are flipped in each iteration.
	Further, a list decoding variant is given as follows: if more than $maxflip$ positions possess the maximum value of $\Phi$, multiple subsets of these values can be chosen. For each of these subsets, an iterative decoder can be started.
	
	\subsubsection{Information Set Decoding}
	
	For this decoding, the values of $\Phi$ have to be calculated only once and a sorting has to be done to find the $k$ most reliable positions. There exist several possibilities to find the  $k$ most reliable positions which are a systematic. Here we sort the columns of the generator matrix according to their reliability and choose the linearly independent pivot columns as information set.
	Therefore, the complexity is the calculation of $\Phi$, the sorting, and the pivot calculation.
	Since we have a systematic encoding matrix we can do list decoding
	by flipping some of the $k$ systematic positions. The flip-pattern
	is a binary vector that contains ones at the positions which are flipped. We choose the list size $1+k + k(k-1)/2$ which means we try all flip-patterns of weight one and two, corresponding to single and double errors.
In general, the decoding performance improves when using more flip-patterns of larger weight, however, the performance can not be better than ML decoding.	
	
	\medskip
	\begin{algorithm}[h]
		\DontPrintSemicolon
		\SetKwInput{Input}{input}\SetKwInput{Output}{output}
		\Input{$r$, minimum weight checks $\mathcal{W}$}
		\Output{$\hat{c}$}
		
		$\hat{c} = r$\;
		
		\For{$\mathrm{iter}<\mathrm{max iter}$}{
			\If{$\hat{c} \in \mathcal{C}$}{
				break\;
			}
			calculate $\Phi$\;
			
			$J = \{j \in \{0,1,\dots, n-1\}: \Phi_j = \max(\Phi)\}$\;
			\If{$|J| > maxflip$}{
				$J = $ choose $maxflip$ random positions of $J$
			}
			flip $\hat{c}_j \quad \forall j \in J$
		}
		\caption{Error Reduction}\label{eredec}
	\end{algorithm}
	
	\medskip
	\begin{algorithm}[h]
		\DontPrintSemicolon
		\SetKwInput{Input}{input}\SetKwInput{Output}{output}
		\Input{$r$, $\Phi$, generator matrix $G$}
		\Output{$\hat{c}$}
		
		\tcp{find Information set from most reliable positions according to $\Phi$}
		$G_{sorted} = $sort columns of $G$ according to reliability of $\Phi$\;
		$I = $ pivot columns of $G_{sorted}$
		
		\tcp{apply flip-patterns}
		\For{$\forall e \in \mathbb{F}_2^k, 0 \leq \wt(e)\leq2$}{
			$r_I = r|_I + e$\;
			$\hat{c} = r_I \cdot G_I^{-1} \cdot G$\;
		}
		\Return $\hat{c} = \mathrm{argmin}(\mathrm{dist}(\hat{c}, r))$
		
		\caption{Information set decoding}\label{Alg-ISD}
	\end{algorithm}
	
	\medskip
	\subsubsection{Redundancy Set Decoding}
	This decoding algorithm 
	makes use of the fact that for cyclic codes $k$ consecutive positions
	are an information set. Because the code is cyclic we can fix the systematic positions
	and use a precalculated systematic encoding matrix. In order to exploit the Dirichlet principle,
	we can use cyclic shifts of the received vector. The values of $\Phi$ do not have to be recalculated but can also be shifted. Only the sorting in the systematic and in the redundancy part has to be done again.
	Since this decoding was described in detail with examples in the last section we omit a formal description here.

	\subsection{Decoding Performance}
	We start with simulation results of the four BCH$(63,31)$ codes from Example~\ref{ex:ratehalbcode} for the transmission over a BSC channel. 
	Because a BSC channel does not provide reliability information, the selection of the information set is based on the inherent reliability of the $\Phi$ values only. Additionally, all flip-patterns of weight one and two are applied, which corresponds to 497 flip-patterns. As shown in Fig.~\ref{fig_Hard_63_31}, the results of information set decoding (ISD) are equal to the ML bound (ML-LB) for all considered codes, which means that ISD has the same performance as ML decoding. The performance of the first three codes, which share the same true distance of $\delta = 12$, is the same. Since the first code \codename{1} only uses 5 cyclically different dual codewords, the decoding complexity is reduced compared to the other codes with 35 and 44 dual codewords, respectively. Code \codename{4} has a smaller true minimum distance $\delta = 9$ and, thus, shows a small loss in performance for smaller channel error probabilities, even though more dual codewords are used for the calculation of the reliability. In contrast, for error reduction decoding (ERD) \codename{4} shows a better performance than the other codes, whose results have been omitted in Fig.~\ref{fig_Hard_63_31}. For all codes, ERD shows a worse performance than ISD. For comparison, the curve for bounded minimum distance decoding of a BCH code with designed minimum distance $d=11$ is included which is described in Example~\ref{ex:ratehalbcode}.

	\begin{figure}[h]
		\centering	
		\subfloat[Hard decision decoding of BCH$(63,31)$. 	\label{fig_Hard_63_31}]
		{\includegraphics[]{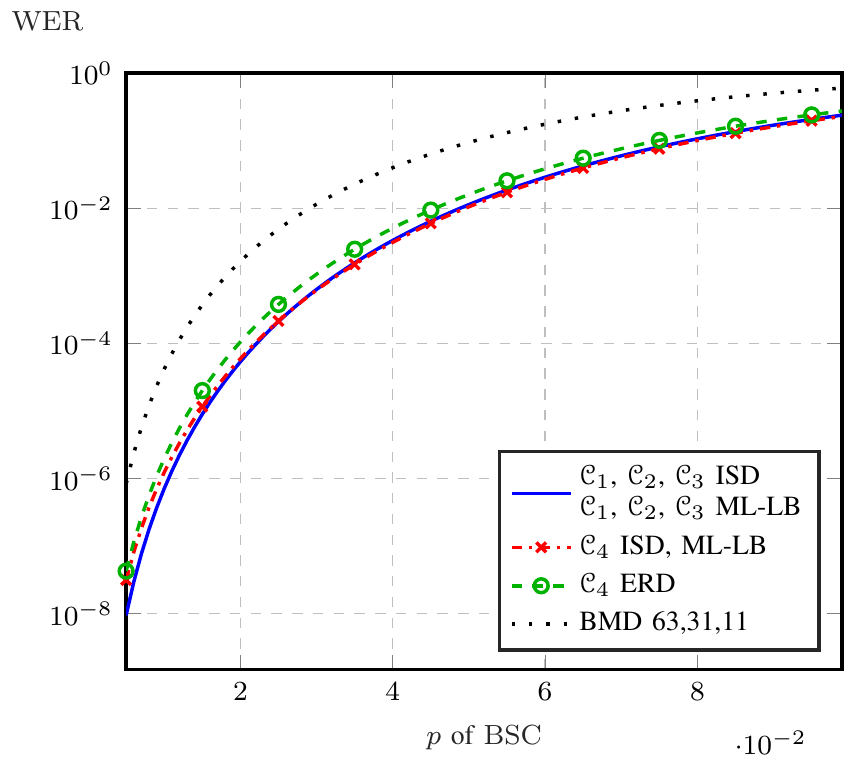}}
		\hfil
		\subfloat[Hard decision decoding of BCH$(63,22)$. 	\label{fig_Hard_63_22}]
		{\includegraphics[]{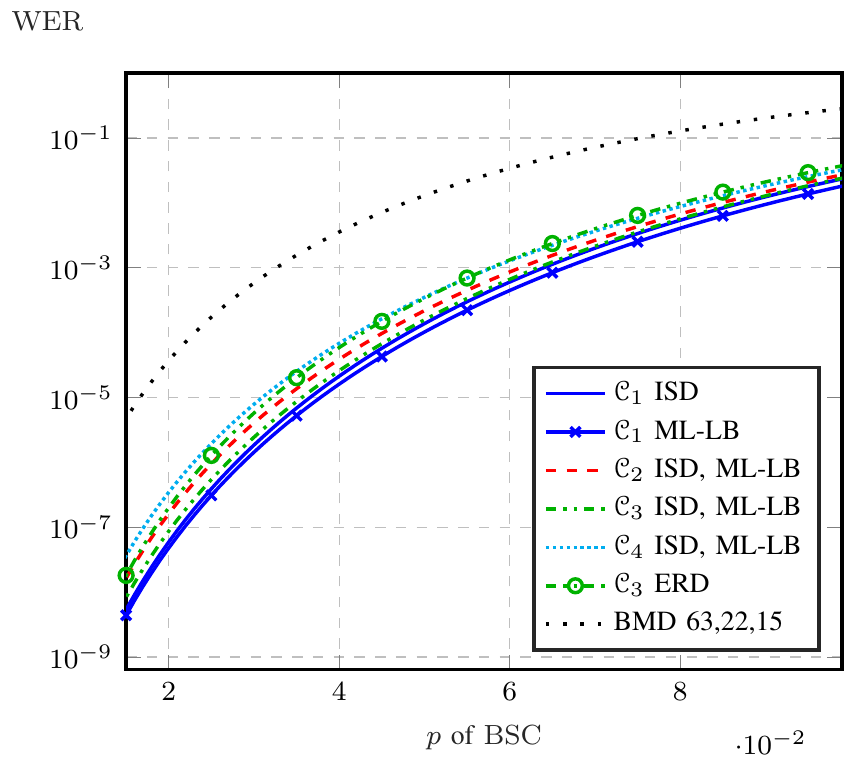}}
		
		\caption{Hard decision decoding of multiple codes with codelength 63.}
	\end{figure}
	
	The simulation results of the four BCH$(63,22)$ codes from Example~\ref{ex:rmcode}
	are visualized in Fig.~\ref{fig_Hard_63_22}. Again all flip-patterns of weight one and two are used for ISD, which for this dimension corresponds to 254 flip-patterns.
	The best performance is achieved by ISD of \codename{1}, which has the largest true minimum distance, $\delta = 16$. Since this code uses only 20 dual codewords, the complexity for the calculation of $\Phi$ is reduced compared to the other codes. Additionally to the single minimum weight dual codeword of weight $6$, $19$ dual codewords of weight $8$ are added. The second best performance has \codename{3}, which uses $30$ dual codewords. For \codename{2}, additionally to the single dual codeword of weight $6$, the $25$ dual codewords of weight $8$ are used for decoding. This code shows a worse performance than \codename{3}. \codename{4} corresponds to the punctured RM code and has $155$ minimum weight dual codewords. This code shows the worst performance of the considered codes. This is an example for which the decoding of a BCH code performs better than the one for the RM code. The performance of \codename{2}, \codename{3}, and \codename{4} is equal to ML decoding of the respective code and the performance of \codename{1} is close to the ML-LB. For all codes, ERD shows a worse performance than ISD. The best results of ERD are achieved by \codename{3}, which is also shown in Fig.~\ref{fig_Hard_63_22}. For small channel error probabilities, the performance is close to ISD of \codename{2}, whereas for large $p$, ISD of all codes shows a better performance.
	The BMD decoding is more than a factor of $100$ worse than code $1$.
	
	For the performance of redundancy set decoding, we simulate two codes. 
	There exists a BCH code of dimension $24$ and minimum distance $15$, which was already used above.
	For this code the cyclotomic coset
	$21$  is exchanged by the coset  $15$ in code $4$ of Example~\ref{ex:rmcode}.
	This increases the coderate at the same minimum distance.
	Fig.~\ref{fig_CohHard_63_24} shows the decoding performance of redundancy set decoding for this code in dependence on the number of cyclic shifts used by the decoder. 
	The number $\mu$ is chosen to be $17$ and no flip-patterns are used.
	Using four shifts of the received vector, the decoding performance is very close to the ML bound. 
	For comparison, the performance of information set decoding of code $1$ from Example~\ref{ex:rmcode} is included.
	As expected, the ML decoding of a code with fewer codewords, i.e. smaller coderate, should have a better performance, and  Fig.~\ref{fig_CohHard_63_24} shows this effect. 
	In general, the redundancy set decoding has a worse performance
	than ISD. 
	
	We use redundancy set decoding for a BCH code of length $n = 127$ and dimension $k=43$.
	The chosen cyclotomic cosets are $K_i, i \in \{1,3,5,7,9,11,13,15,19,27,29,43 \}$, which provides a designed minimum distance of $d = 21$ and a true minimum distance of $\delta = 31$. 
	In this case, there are $63$ cyclically different minimum weight codewords of weight $12$.
	As Fig.~\ref{fig_CohHard_127_43} shows, without additional shifts there is almost no performance loss due to selecting $\mu = 20$ of the most reliable positions from the redundancy part and $k-\mu$ from the systematic for $n=127$.
	This can be explained with the weak law of large numbers.
	Using up to eight shifts and no flip-patterns, the performance of the redundancy set decoder is better than decoding all error patterns up to half the true minimum distance, which is $\lfloor(\delta-1)/2\rfloor=15$.
    It has to be emphasized that algebraic decoding of the used code can only correct up to $10$ errors as it is limited by the designed minimum distance.
	As the performance of the redundancy set decoding is not determined by the designed minimum distance directly, it can improve upon algebraic decoding by a factor of up to $10^5$ at $p=10^{-2}$.
	At the cost of an increased complexity, further performance improvements are possible using ISD with all flip-patterns of weight $\leq j$, to which we refer as ISD($j$).
	ISD($5$) requires testing up to $10^6$ patterns, but approaches the ML performance for channels with a small error probability.

	The redundancy set decoding for the rate one half code was simulated for different $\mu$ and different shifts.
	However, the performance could not reach those of information set decoding
	as in the case of lower rate codes. If no or only a few flip-patterns
	are used, the performance can compete. Nevertheless, there remain many open problems for further research,
	especially the choices of the shifts in combination with $\mu$. 
	There might be advantages in case of longer codes that are not studied yet.

	\begin{figure}[h]
		\centering
		\subfloat[Redundancy set decoding of BCH$(63,24,15)$ \label{fig_CohHard_63_24}]
		{\includegraphics[]{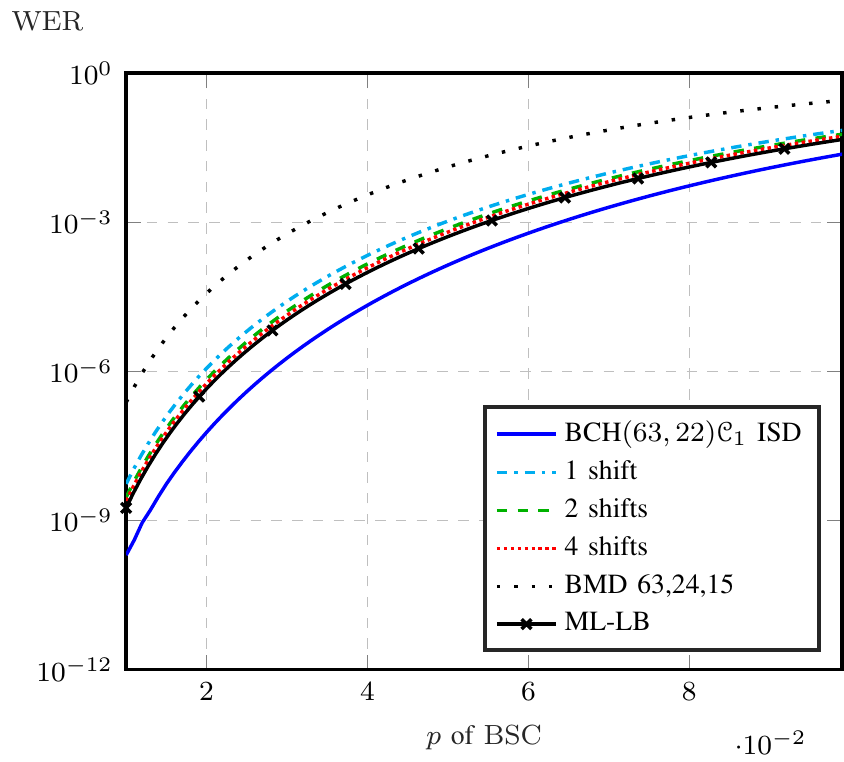}}
		\hfil
		\subfloat[Redundancy set decoding of BCH$(127,43,21)$ \label{fig_CohHard_127_43}]
  		{\includegraphics[]{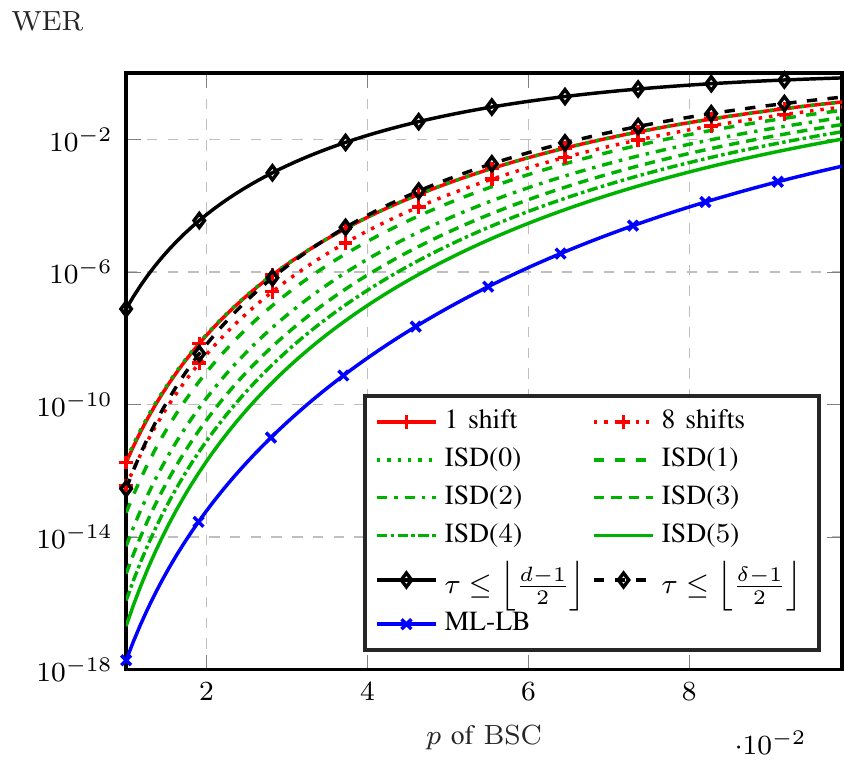}}
		
		\caption{Redundancy set decoding of codes with different codelengths and rate $R \approx 0.35$}
	\end{figure}
	
	\begin{figure}
		\centering
		
		\subfloat[Information set decoding of BCH$(127,64)$ \label{fig_Hard_127_64}]
 		{\includegraphics[]{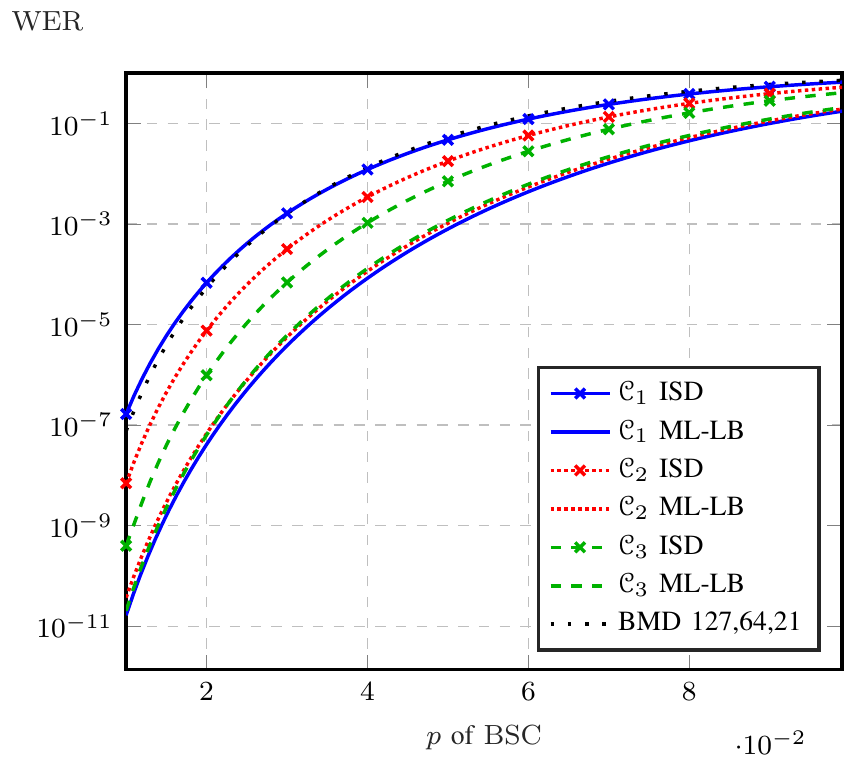}}
		\hfil
		\subfloat[Comparison of BCH and RM \label{fig_Hard_ISD_127_64}]
 		{\includegraphics[]{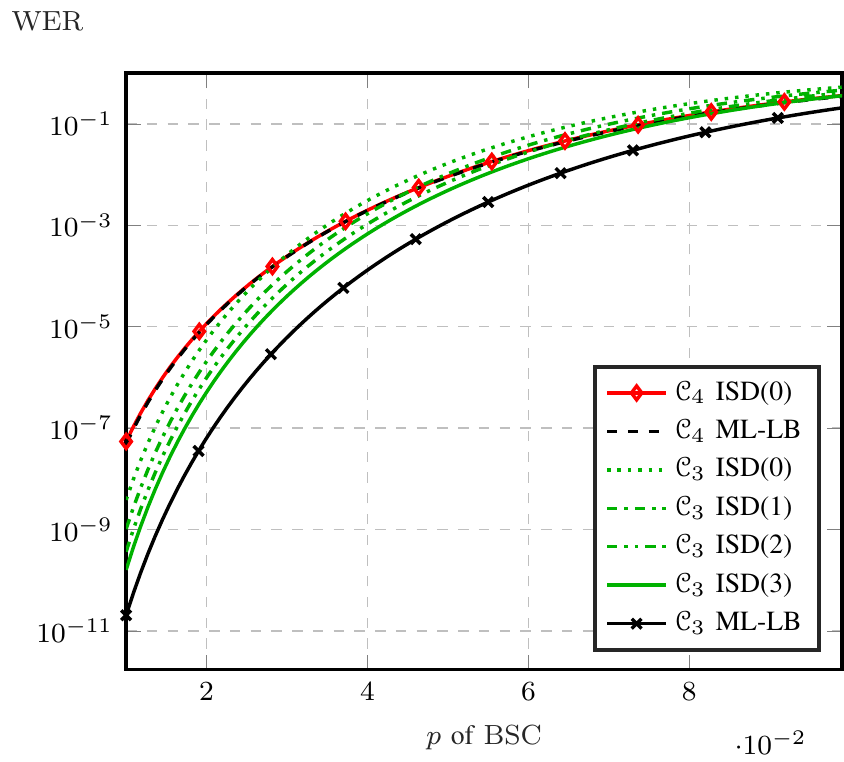}}
		
		\caption{Information set decoding of BCH$(127,64)$}
	\end{figure}
	Next, we look at the performances of the four BCH codes $(127,64)$ from Example~\ref{bch127_64} 
	which have $28$, $119$, $1\,590$, and $651$  decoding polynomials, respectively.
	First, we compare the three BCH codes which are not equivalent to an RM code 
	and then we compare the RM equivalent code \codename{4} to \codename{3}. For ISD, all flip-patterns of weight $\leq 2$ are used, which corresponds to $2\,081$ patterns.	While \codename{1} shows the same performance as BMD decoding for ISD,  codes
	\codename{2} and \codename{3} show better performances, 
	see Fig. ~\ref{fig_Hard_127_64}. 
	The performances are quite different and seem to be related to the number of decoding polynomials. 
	For these codes, the performances are not close to the ML bounds and as expected
	the worse a decoder performs the larger is the distance to the ML bound. Since we show only the lower bound, ML decoding might be closer to the performance of ISD.
	
	The performance of the punctured RM code \codename{4} is compared with the performance 
	of the BCH code \codename{3} in Fig.~\ref{fig_Hard_ISD_127_64}.
	Both codes have the same dimension, but
	the RM equivalent BCH code has minimum distance $d=15$ whereas the
	BCH code has $d=21$. 
	In case of a large channel error probability,
	there is hardly any difference in the performance of these codes.
	However, at small error probabilities, the larger minimum distance of the BCH code becomes obvious
	by a performance improvement up to a factor of $100$.
	A surprising result is that the performance of the information set 
	decoding of the RM equivalent BCH code is independent of the number of flip-patterns used. The performance of ISD(0) is equal to the ML-LB and thus cannot be improved by using additional flip-patterns.
	Here, ISD($j$) means all possible subsets of the $k$ 
	systematic positions with cardinality $\leq j$ are flipped and the list size
	is $\sum_{i=0}^j {k \choose i}$. For \codename{3}, the performance increases if more flip-patterns are used.
	
	\begin{figure}[h]
	
		\centering
		\includegraphics[]{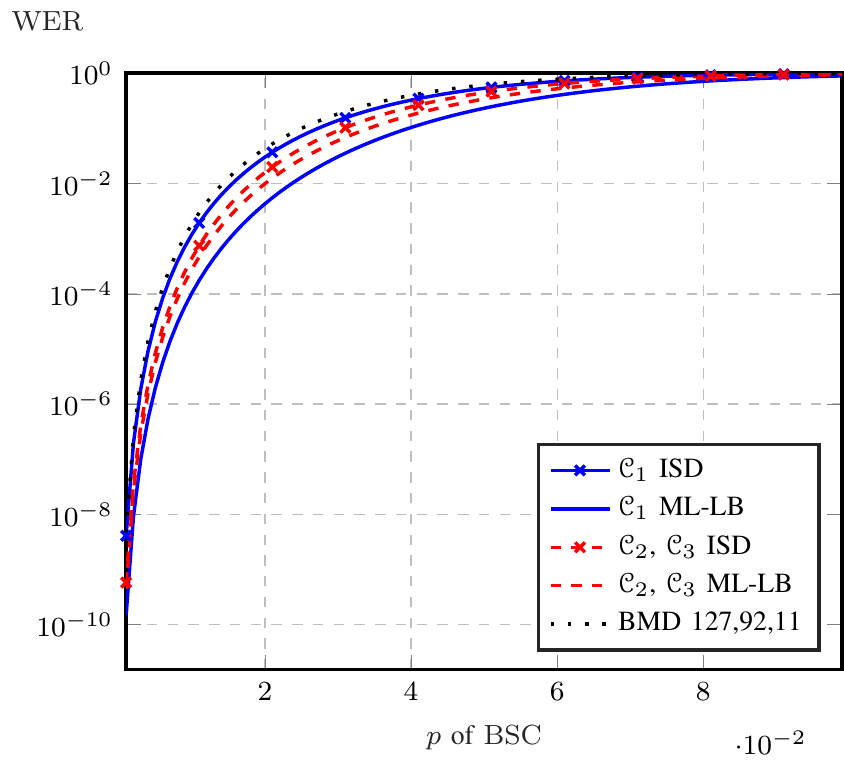}
		\caption{Hard decision decoding of BCH$(127,92)$}
		\label{fig_Hard_127_92}
	\end{figure}
	
	The last example is a BCH code of higher rate, namely  $(127,92,11)$.
	\codename{1} uses the cyclotomic cosets $\{1,3,5,31,63\}$ and has $21$ decoding polynomials,
	\codename{2} uses $\{1,3,5,7,9\}$ and has $155$ polynomials, and
	\codename{3} uses $\{1,3,7,23,55\}$ and has $429$ polynomials. The results for these codes are displayed in Fig.~\ref{fig_Hard_127_92}. For this coderate, $4\,279$ flip-patterns are used for information set decoding. \codename{2} and \codename{3} show identical performance, which is close to the ML bound. In contrast, \codename{1}, which uses only $21$ decoding polynomials, performs worse and is only slightly better than BMD decoding.

	\section{Soft Decision Decoding}\label{sec:softdec}
	For the additive white Gaussian noise (AWGN) channel we use the usual mapping of the binary code symbols  $c_i=0 \leftrightarrow x_i = 1$ 
	and $c_i=1 \leftrightarrow x_i = -1$ which corresponds to binary phase shift keying (BPSK).
	We receive $y_i = x_i + n_i$ where $n_i$ denotes the Gaussian noise.
	The signal to noise ratio for information bits in dB is called $E_b/N_0$. 
	Then the signal to noise ratio  for a codebit is $E_s/N_0 = E_b/N_0 - 10 \cdot \log_{10}(1/R)$
	where $R$ is the coderate of the used code.
	The variance of the noise is then $\sigma^2=1/(2 \cdot 10^{E_s/(10 N_0)})$.
	
	First, we will describe the intrinsic reliability information from the channel. 
	Then, we use the dual codewords to calculate extrinsic reliability information and show that this improves the reliability information.
	A statistical analysis of the novel reliability information can be used to predetermine flip-patterns which are used for list information set decoding. 
    An example that this method allows to predict the  word error rate performance of the decoding is given.
	Then, the used information set decoder is described.
	We give no results of the soft decision error reduction decoder since the performance
	is worse than with the information set decoding.
	This can be intuitively explained with the same observation that we made for the BSC in Example \ref{ex:err_distr}: The probability that most of the reliable positions are correct is higher than the probability that the least reliable positions are erroneous.
	The simulation results show that the reliability from the Gaussian channel can be improved by the inherent reliability from the minimal weight dual codewords and, thus, the decoding performance.
	We illustrate that, when taking only the check polynomial the decoding performance can not be improved compared to only using the intrinsic reliability from the channel.
	From all checks, only some may be selected according to a criterion which leads to a reduction in decoding complexity
	which was also observed in \cite{haeger}.
	We also show that increasing the number of flip-patterns improves the decoding performance until ML decoding is reached.
	Finally, we compare our decoder to adaptive belief propagation \cite{ABP}, bias-based multibasis information set decoding \cite{bias_ISD}, and to decoding of polar codes \cite{Liva} and give runtime measures to illustrate and compare the decoding complexities.
	We did not use stopping criteria from the literature in order to reduce the average decoding time, however, this could be included straightforward into the algorithms.

	\subsection{Aspects of Reliability Information}
	While the BSC offers no information about the reliability of the received symbols,
	the AWGN channel with its real-valued output alphabet
	allows the calculation of  reliability information and we use the common measure 
	(\cite{bias_ISD, ABP})
	\begin{equation}
	L_j = \tanh\left(\frac{y_j}{\sigma^2}\right)\;.
	\end{equation}
	Clearly, if $|L_i|>|L_j|$, then position $i$ is more reliable than position $j$.
	The reliabilities can be used for information set decoding. According to Algorithm~\ref{Alg-ISD}, among the most reliable 
	positions, $k$ are chosen which are an information set. We will denote this by ISD-Chan. 
	However, the minimal weight codewords of the dual code define a set of checks which can be used to calculate extrinsic reliability information.
	Let $j_0, j_1, \ldots, j_{\delta^\perp -1}$ be the non-zero positions of the dual codeword
	(as vector, not as polynomial).
	The product of the $L_j$ is denoted by $\Delta = L_{j_0} \cdot  L_{j_1} \cdot \ldots  \cdot L_{j_{\delta^\perp -1}}  $.
	Then, the extrinsic reliability for these check positions is 
	\begin{equation}
		\Phi_i = 2 \atanh\left(\frac{\Delta}{L_i}\right), \ i = j_0, j_1, \ldots, j_{\delta^\perp -1} \;.
	\end{equation}
	All minimal weight dual codewords are used as checks and the extrinsic information for each position is added.
	Now, the reliability information is the combination of the intrinsic information (channel reliability $L_j$)
	and the extrinsic information from all checks $L_j + \alpha \Phi_j$, where $\alpha$ is 
	an attenuation factor which is determined by simulations.
	Again, the most reliable positions can be chosen to find an information set which will be denoted by ISD-Dual.
	Does the extrinsic information improve the quality of the reliability information?
    In order to visualize this, we simulated the number of errors in the best positions for ISD-Chan and ISD-Dual.
    The simulation is for the BCH$(127,64)$ code \codename{3} at \SI{2}{\dB}. 
    This code has $1\,590$ minimal weight dual code polynomials which corresponds to $201\,930$ checks.
    Fig.~\ref{fig_Matrix_FlipPattern} shows the curves that $\tau$ errors occur in the $k$ best positions. 
    Each point $a_{\tau,\ell}$ of a curve is the relative frequency that $\tau$ errors have occured and $\ell$ determines the most reliable of the $\tau$ errors.
    So $\tau-1$ errors are in the more unreliable positions $\ell+1$ to $k-1$.
    Therefore, the probability that $\tau$ errors occur in the positions $\ell$ to $k-1$ is
 $a_{\tau,\ell}+a_{\tau,\ell+1}+ \ldots+a_{\tau,k-1}$.
    It can be seen that the number of errors in the best positions is considerably reduced for ISD-Dual compared to ISD-Chan. 
    For example, for $\ell=15$ and $\tau=3$ the relative frequency that $2$ errors are among  
the positions $16$ to $63$ for ISD-Chan is the same as  $\tau=1$  error at position $\ell=15$ 
when using ISD-Dual. Similarly, for $\ell=50$ and $\tau=2$ 
the relative frequency that $1$  of the two errors is in positions $51$ to $63$ for ISD-Chan is larger than
there is $\tau=1$ error in position $50$ for ISD-Dual. Further, for ISD-Dual the probability 
of  $\tau=1$  error in the  positions $0$ to $6$ is the sum of the values 
of the relative frequencies from $0$ to $6$ and it is smaller than the probability of  $\tau=3$  errors 
in the positions $42$ to $63$.
This fact will be used for the selection of the flip-patterns for list information set decoding in the following.	
	
	\begin{figure}
		\centering
                \includegraphics[]{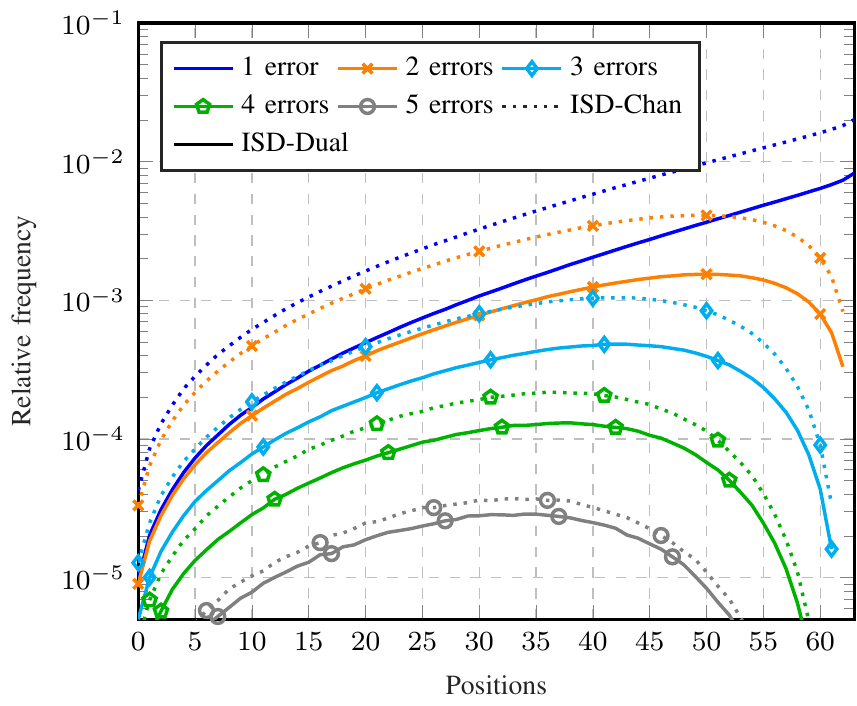}
		\caption{Relative frequency of errors in the $k$ positions that are contained in the information set for code BCH$(127,64)$ \codename{3}. 0 corresponds to the most reliable position and 63 is the position with the smallest reliability. Simulated at \SI{2}{\dB} using $T=100$ and $\alpha = 0.07$.}
		\label{fig_Matrix_FlipPattern}
	\end{figure}

For list decoding, flip-patterns are added to the positions of the information set. There exist several strategies for choosing flip-patterns given a fixed list size.
	In \cite{flippattern} several strategies are compared and one is based
	on the so called elliptic weight.
We use a similar approach as in \cite{wu2008soft} based on Fig.~\ref{fig_Matrix_FlipPattern}.  We construct a matrix 
	$A$ which counts the relative frequency of occurrence of  $\tau$  errors at certain positions. 
	As defined before, each element $a_{\tau,\ell}$ of $A$ gives the relative frequency of $\tau$ errors in the $\ell$ 
	positions of the information set where $\ell$ is the error with the largest reliability value. 
	With this matrix $A$, the expected WER of a simulation can be predicted by 
	the sum over all elements $a_{\tau,\ell}$, which are not within the flip-patterns
	\begin{equation}\label{expwer}
	\text{WER}_{est} = \sum_{\tau}\sum_{\ell=0}^{k-1} a_{\tau,\ell}\;.
	\end{equation}
	This estimation can, on the other hand, also be used to calculate how many flip-patterns have to be used in order to reach a certain performance.
	For example in case of two errors the relative frequency that the more reliable one
	of these two is at position $30$ in the information set is according Fig.~\ref{fig_Matrix_FlipPattern} about $10^{-3}$.
	It can be observed that the relative frequency of $3$ errors in positions $50$ to $64$,
	which are the unreliable positions among the $k$ most reliable ones,
	is higher than for $2$ errors with the larger reliability in the positions $0$ to $14$. 
	Equation~(\ref{expwer}) can also be used to minimize the expected WER under the condition that a fixed number of flip-patterns is used.
	The following example compares the WER simulation with the estimation using the preselected flip-patterns.
	\begin{example}[Estimated WER]
            The matrix is simulated for the code BCH$(127,64,21)$  \codename{3} at \SI{2}{\dB}.
            If a maximum number of $100$ flip-patterns is to be used, the matrix indicates that it is better to use all flip-patterns of weight one on the $55$ least reliable positions and all flip-patterns of weight two on the ten least reliable positions.
            The matrix indicates that a WER of $\text{WER}_{est} = 0.060$ is to be expected.
            The true WER obtained from simulations with the chosen flip-patterns is $0.064$.
	\end{example}

	\subsection{Decoding Algorithm}
	This section describes the soft decision list information set decoding algorithm.
	We assume that the vector $y = (y_0, y_1, \dots, y_{n-1})$ has been received after transmission over the AWGN channel. The used reliability is $L_j = \tanh\left(\frac{y_j}{\sigma^2}\right)$.
	The reliabilities of a check are
	\begin{equation}
	L_{\sup(b)} = \{L_{b_0}, L_{b_1}, \dots, L_{b_{\delta^\perp-1}}\}\;.
	\end{equation}
    We introduce a modification which does not use checks with too many unreliable positions.
    For this, we find the  $T$ most reliable positions. 
    We use only checks that have $\geq \delta^\perp -1$ positions within these $T$ most reliable positions.
    In other words, only zero or one position of the check is from the $n-T$ most unreliable positions.
    The following algorithm is used for the simulations in the next section.  	

\vspace{0.5cm}
\begin{algorithm}[h]
		\DontPrintSemicolon
		\SetKwInput{Input}{input}\SetKwInput{Output}{output}
		\Input{$y$, minimum weight checks $\mathcal{W}$, $T\in \mathbf{N}$, flip-patterns $\mathcal{F}$, damping coefficient $\alpha$}
		\Output{$\hat{c}$}
		
		\tcp{intrinsic information}
		$L_j =  \tanh\mleft({\frac{y_j}{\sigma^2}}\mright)\quad\forall j $\;
		\tcp{calculate $\Phi$}
		$\Phi_j = 0 \quad \forall j$\;
		find the $T$ best positions \; 
		\For{$\forall\mathcal{H}\in\mathcal{W}$}{
			\If{ $ \geq \delta^\perp-1$ positions of $\mathcal{H}$ are in the $T$ best}{
				$\Delta = \prod_{h\in\mathcal{H}} L_h$\;
				\For{$h\in\mathcal{H}$}{
					\tcp{extrinsic information}
					$\Phi_h \pluseq 2\cdot\atanh\mleft(  \Delta / L_h \mright)$ 
				}	
			}	
		}
		\tcp{update intrinsic information}
		$
		L_j \pluseq \alpha\cdot \Phi_j \quad\forall j 
		$
		
		\tcp{perform information set decoding}
		$G_{sorted} = $ sort columns of $G$ according to reliability of $L_j$\;
		$I = $ pivot columns of $G_{sorted}$\;
		$r = \mathrm{hard}(L)$\;
		
		\tcp{apply flip-patterns} 
		\For{$\forall e \in \mathcal{F}$}{
			$r_I = r|_I + e$\;
			$\hat{c} = r_I \cdot G_I^{-1} \cdot G$\;
		}
		\Return $\hat{c} = \mathrm{argmin}\left(\sum_j \left(r_j-(-1)^{\hat{c}_j}\right)^2\right)$
		\vspace{4pt}
		\caption{Decoding algorithm ISD-Dual with $\alpha$ and $T$. If $T$ is omitted, all checks are used.\label{LISD}}
	\end{algorithm}

	\subsection{Decoding Performance}

    \begin{figure}
		\centering
				\includegraphics[]{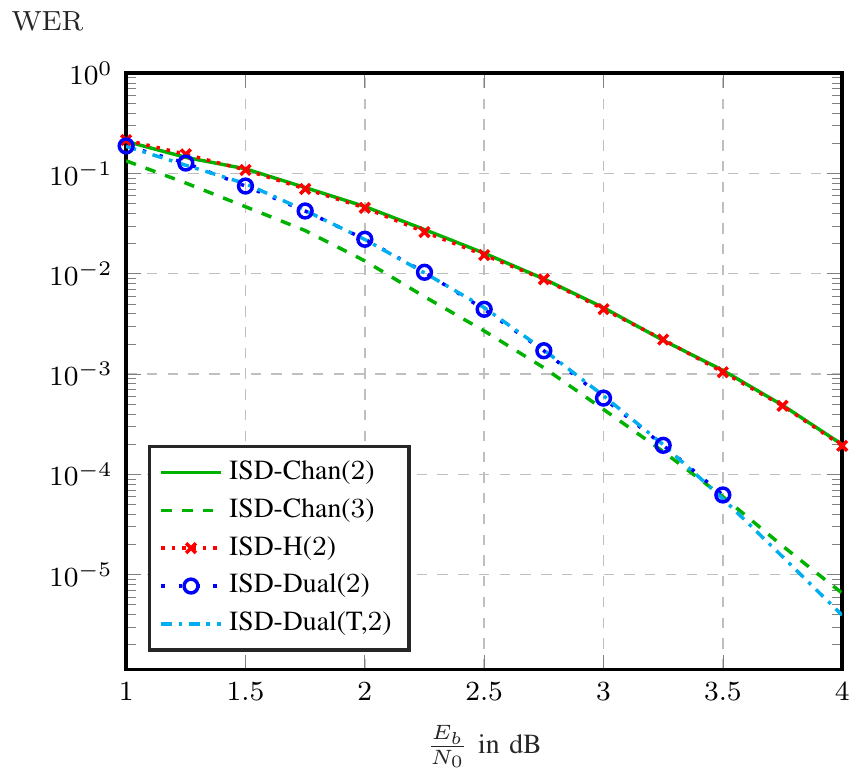}
		\caption{Comparison of different approaches including extrinic information into the decoding of the BCH$(127,64,21)$ code \codename{3}.}
		\label{fig:BCH1}
	\end{figure}
	
	In Fig.~\ref{fig:BCH1}, we show the simulation results of soft decision list information set decoding of the BCH$(127,64,21)$ code \codename{3}. 
	The WER curves for ISD-Chan using $2\,080$ flip-patterns of weight $\leq 2$ and for $43\,744$ of weight $\leq 3$ are depicted.
	The curve ISD-H uses the check polynomial which corresponds to $n$ checks
	to calculate extrinsic reliability information with $\alpha = 0.2$ and all flip-patterns of weight $\leq 2$. 
	It can be observed that this does not improve the decoding performance compared to ISD-Chan.
	Using all dual minimal weight checks ISD-Dual with  $\alpha = 0.07$ and all flip-patterns of weight $\leq 2$ considerably improves the decoding performance. 
	In fact, it crosses the  ISD-Chan with all flip-patterns of weight $\leq 3$ at about \SI{3.5}{\dB}.
	Another observation is, when using the threshold $T=100$ the number of checks is 
	reduced from $201\,930$ to $5\,089$ in average without any loss in performance.  
	This reduces the decoding complexity considerably, as shown later.

	In Fig.~\ref{fig:BCHRM1}, the BCH$(127,64,21)$ code \codename{3} and the punctured RM code 
	$\codename{4}$
	are compared with soft decision list information set decoding. 
	For all simulations flip-patterns of weight $\leq 2$ are used. 
	The first curve shows the ISD-Chan for the BCH code.
	For all remaining curves $T=100$ is used.
	The RM code (ISD-Dual-RM)  with $\alpha = 0.1$ shows a performance close to the ML bound (ML-LB-RM) and performs slightly better than the BCH code (ISD-Dual-BCH) for \SI{<2.7}{\dB}.
	However, since the BCH has a larger minimum distance, the decoding performance
	gets better than the ML decoding of the RM code for better channel conditions.
    Further, since the (ML-LB-BCH) shows a much better performance, the decoding of the BCH code can be improved by using more flip-patterns, which is shown later in  Fig.~\ref{fig:BCH2}. 
	The ML bounds are different for the RM and the BCH code which is a property of the bound that it estimates the ML performance of the particular code used.
    An interesting observation is that the decoding of the RM code can not be improved by using
	more flip-patterns since it is already very close to the ML bound.
	Note that the slope of ISD-Dual-BCH is steeper than the one of ISD-Chan-BCH.

    \begin{figure}
		\centering
		\subfloat[Comparison of the BCH$(127,64,21)$ code \codename{3} and the punctured RM code \codename{4}.\label{fig:BCHRM1}]
        {\includegraphics[]{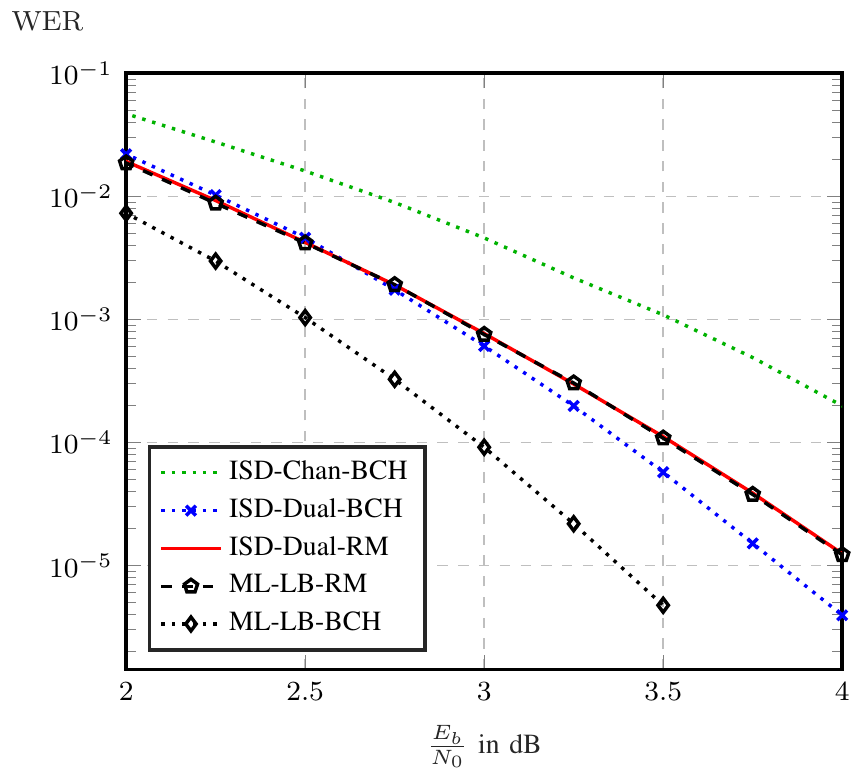}}
		\hfil
		\subfloat[Decoding the BCH$(127,64,21)$ code \codename{3} with different number of  flip-patterns. \label{fig:BCH2}]
 		{\includegraphics[]{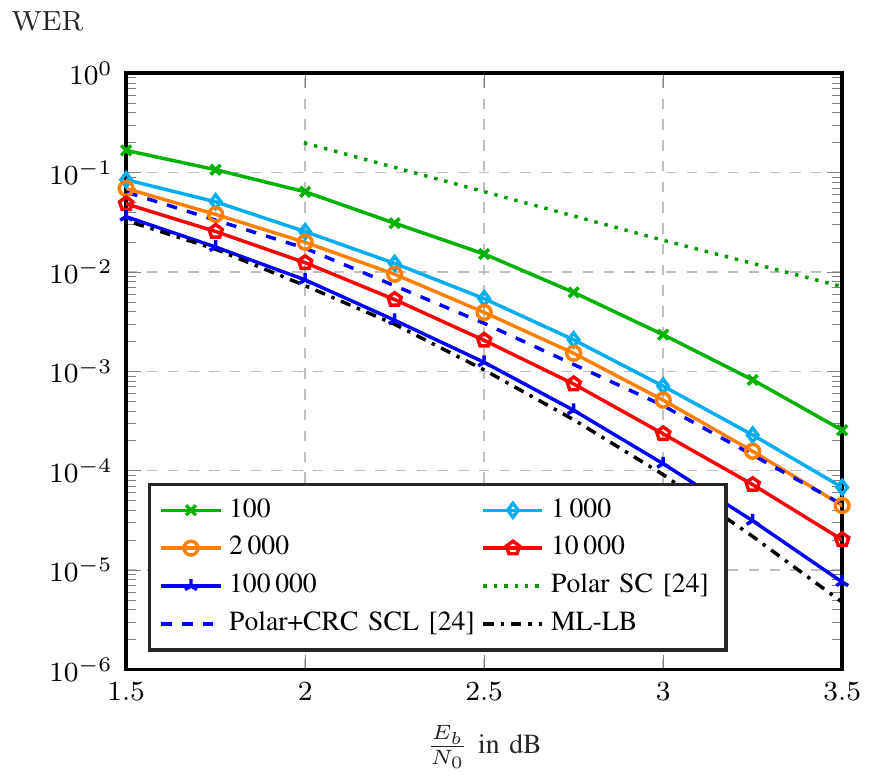}}
		\caption{Soft decision decoding with the proposed decoding algorithm.}
	\end{figure}

	Now, we show that increasing the number of flip-patterns improves the decoding performance.
	Different list sizes are compared in Fig.~\ref{fig:BCH2}.
	Again, the BCH$(127,64,21)$ code \codename{3} is simulated.
    We use ISD-Dual with $T=100$ and $\alpha=0.07$.
    The flip-patterns are selected according to (\ref{expwer}) in order to minimize the expected WER.
	With $100$ flip-patterns the performance is already better than the polar code of length $128$ and $k=64$ from \cite{Liva} with successive cancellation (SC) decoding.  
	With  $10\,000$  flip-patterns the performance is strictly better than list decoding of the crc-aided polar code from \cite{Liva}.
	Further, the performance curve of the crc-aided polar code is flatter, because it has a smaller minimum distance.
	As a consequence, already $2\,000$  flip-patterns suffice in order to achieve the same performance with the BCH code at \SI{3.5}{\dB}.
	Using  $100\,000$ flip-patterns the performance is close to ML decoding.
	The decoded code is the best known linear code of this length and rate \cite{Grassl:codetables}.
	Therefore, it is not expected that any other code can achieve a better performance and we omit a comparison with further codes.

	\begin{figure}[h]
		\centering
				{\includegraphics[]{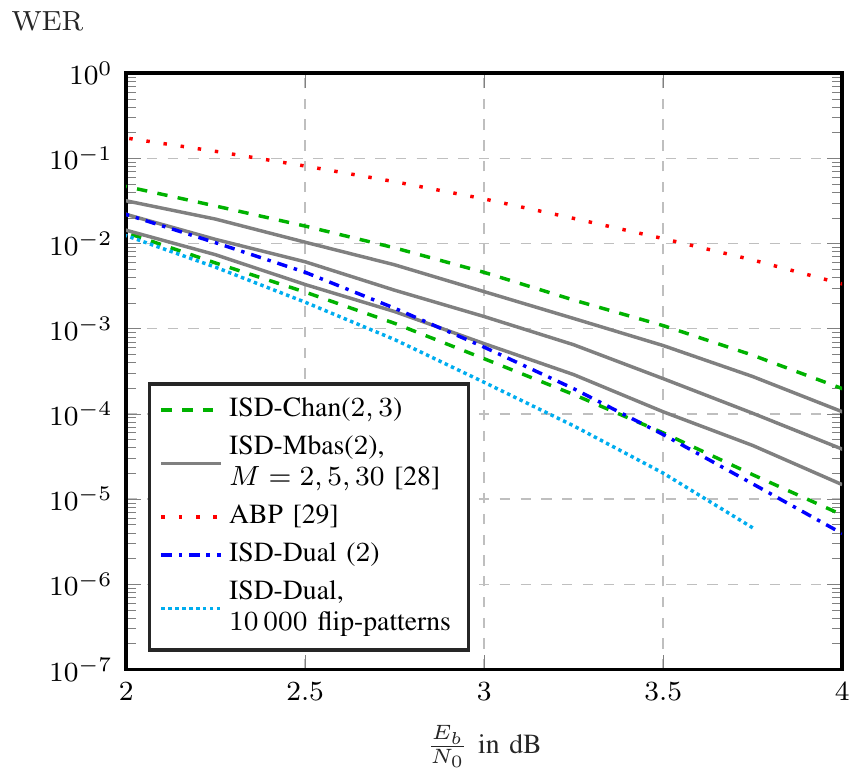}}
		\caption{Comparison of the proposed algorithm with adaptive belief propagation \cite{ABP} and bias-based multibasis information set decoding \cite{bias_ISD}.}\label{fig:soft_comparison}
	\end{figure}
    In	Fig.~\ref{fig:soft_comparison}, we compare our ISD-Dual decoding with ISD-Mbas (bias-based multibasis information set decoding \cite{bias_ISD}) and with ABP (adaptive belief propagation  \cite{ABP}). 
    It can be seen that ABP with $M=20$ iterations and damping coefficient $\alpha=0.1$ has the worst performance. 
    In addition, we give runtime measures for these algorithms in Table~\ref{table:runtimes}. 
    Three curves are plotted for ISD-Mbas with $a=0.1$ for a different number of bases  $M=2,5,30$  and the related runtime measures are given in Table~\ref{table:runtimes}. 
    The best performance shows the ISD-Dual with $T=100$ and $\alpha=0.07$ with  $10\,000$ flip-patterns at a factor of $5$ smaller runtime as the $30$ basis ISD-Mbas. 
    The performance is also better than ISD-Chan with all flip-patterns of weight $\leq 3$ which has almost doubled runtime. Note that the slope of ISD-Dual is steeper than both, ISD-Chan and ISD-Mbas.

    \begin{table*}
        \centering
        \caption{Measured worst-case runtimes of the proposed decoding algorithm, adaptive belief propagation \cite{ABP} and bias-based multibasis decoding \cite{bias_ISD}.}\label{table:runtimes} 
        \begin{tabular}{cccccccccc} 
        \toprule
        \multicolumn{2}{c}{ISD-Chan} & \multicolumn{4}{c}{ISD-Dual($2$) with $\alpha = 0.07$} & \multicolumn{3}{c}{ISD-Mbas($2$), $a=0.1$}& ABP $M=20$, $\alpha=0.1$ \\ 
        \cmidrule(lr){1-2}\cmidrule(lr){3-6}\cmidrule(lr){7-9}
        &&\multicolumn{2}{c}{$w=2$} & \multicolumn{2}{c}{$10\,000$ flip patterns}\\
        \cmidrule(lr){3-4}\cmidrule(lr){5-6}
        $w = 2$ & $w=3$ & $T=100$& full & $T=100$& full & $M=2$ & $M=5$& $M=30$\\
        \midrule 
        \SI{1.2}{\milli\second}&\SI{13.3}{\milli\second}&\SI{4.8}{\milli\second}&\SI{28.3}{\milli\second}&\SI{7.0}{\milli\second}& \SI{30.5}{\milli\second} &  \SI{2.5}{\milli\second}&\SI{6.3}{\milli\second}&  \SI{37.1}{\milli\second} & \SI{11.5}{\milli\second} \\
        \bottomrule
        \end{tabular}
    \end{table*}

	\section{Conclusions}
	We have shown that erroneous codewords can be checked by minimal weight dual codewords and these checks 
	provide reliability information for the received symbols.
	We described three decoders which use this reliability and information set decoding was superior.
	In many examples the decoding has the same performance as hard decision ML decoding
	which we have shown using an ML bound.
	We recalled that any RM code can be described as a particular BCH code extended by a parity bit.
	We presented examples where the BCH codes show better performance than
	the punctured RM codes. 
    We described BCH codes with different choices of the cyclotomic cosets and compared their decoding perormance.
    For soft decision list information set decoding, we have derived that the use of the extrinsic reliability from dual codewords in addition to the channel reliability can improve the decoding performance considerably.
	We compared the performance to known methods from literature and
	for the BCH code of length $127$ and dimension $64$ our decoder can reach close to ML decoding.
	
	\begin{appendix}
		Reed--Muller codes can be defined recursively using the Plotkin construction \cite{Plotkin}, however, the same set of codewords can also be obtained from a definition based on Boolean functions. 
		We refer to a function $\mathrm{f}(v_0,\ldots,v_{m-1})$ as a Boolean function if it takes on the values $0$ and $1$ for any input tuple in $\mathbb{F}_{2}^m$.
		A Reed--Muller code of length $2^m$ is then the set of all Boolean functions of degree up to $r$ evaluated at all $2^m$ input $m$-tuples:
		\begin{align*}	
		\mathcal{R}(r,m) = &\left\{\left(\mathrm{f}(i_0,\ldots,i_{m-1})\right)_{i=0}^{2^m-1} \mid \mathrm{f}(i_0,\ldots,i_{m-1}) \vphantom{\sum_{\boldsymbol s} C_{\boldsymbol s}}\right.\\
		&\phantom{\{\}}=\left. \sum_{\boldsymbol s} C_{\boldsymbol s} (i_0)^{s_0}\cdot\ldots\cdot(i_{m-1})^{s_{m-1}}\right\}\;,
		\end{align*}
		where $i = i_0 + i_1\cdot2+\ldots+i_{m-1}2^{m-1}$, the sum runs over all $\boldsymbol s=(s_0,\ldots,s_{m-1}) \in \{0,1\}^m$ such that $0\leq\sum_{t=0}^{m-1} s_t \leq r $ and $C_{\boldsymbol s} \in \mathbb F_{2}$.
		
		In the definition of the Reed--Muller code, the Boolean functions were evaluated at $(i_0,\ldots,i_{m-1})$, where each $m$-tuple corresponded to the binary expansion of the respective code coordinate.
		However, now, we change the evaluation order and evaluate a Boolean function at $(a^{(i)}_{0},\ldots,a^{(i)}_{m-1})$, the coefficients of $\alpha^i \in \mathbb{F}_{2^m}^*$ when expressed through the polynomial basis $\{\alpha^0,\ldots,\alpha^{m-1}\}$, in order to obtain the value of the codeword at position $i$.
		Let $\alpha^{i}=a^{(i)}_{0}+a^{(i)}_{1}\alpha+\ldots+a^{(i)}_{m-1}\alpha^{m-1}$, the code 
		\begin{align*}
		\mathcal{R}(r,m)^* = &\left\{\left(\mathrm{f}(a^{(i)}_{0},\ldots,a^{(i)}_{m-1})\right)_{i=0}^{2^m-2} \mid  \mathrm{f}(a^{(i)}_{0},\ldots,a^{(i)}_{m-1}) \vphantom{\sum_{\boldsymbol s} C_{\boldsymbol s}}\right.\\
		&\phantom{\{\}}\left.= \sum_{\boldsymbol s} C_{\boldsymbol s} (a^{(i)}_{0})^{s_0}\cdot\ldots\cdot(a^{(i)}_{m-1})^{s_{m-1}}\right\}
		\end{align*}
		is then a punctured and permuted RM code.
		First, we show that $\mathcal{R}(r,m)^*$ is cyclic.
		This is equivalent to showing that for all $\mathrm{f}(v_0,\ldots,v_{m-1})$, there exists a $\tilde{\mathrm{f}}(v_0,\ldots,v_{m-1})$, which satisfies
		\begin{align}
		\mathrm{f}(a^{(i)}_{0},\ldots,a^{(i)}_{m-1}) = \tilde{\mathrm{f}}(a^{(i+1)}_{0},\ldots,a^{(i+1)}_{m-1})		
		\end{align}
		$\forall i \in \{0,\ldots,2^m-2\}$.
		In $\mathbb{F}_{2^m}$, the multiplication of $\alpha^{i}$ with $\alpha$ is carried out modulo a primitive polynomial $\mathrm{p}(x)= x^m+\sum_{\ell=1}^{m-1} \mathrm{p}_{\ell}\cdot x^{\ell}+1$ with coefficients in $\mathbb{F}_{2}$. Therefore, we arrive at
		\begin{align}
		\mathrm{f}(a^{(i)}_{0},\ldots,a^{(i)}_{m-1}) &=
		\mathrm{\tilde{f}}\left(a^{(i)}_{m-1},a^{(i)}_{0}+\mathrm{p}_{1}a^{(i)}_{m-1},a^{(i)}_{1}+\mathrm{p}_{2}a^{(i)}_{m-1},\right.\nonumber\\
		&\phantom{=,}\left.\ldots,a^{(i)}_{m-2}+\mathrm{p}_{m-1}a^{(i)}_{m-1}\right).
		\end{align}
		By applying substitution, one sees that a $\mathrm{\tilde{f}}(v_0,\ldots,v_{m-1})$ which satisfies this for all $i$ is given by
		\begin{equation}
		\mathrm{\tilde{f}}(v_0,\ldots,v_{m-1}) = 
		\mathrm{f}(v_1+\mathrm{p}_1v_0,v_1+\mathrm{p}_2v_0,\ldots,v_1+\mathrm{p}_{m-1}v_0,v_0)
		\end{equation}
		and is of the same degree as $\mathrm{f}(v_0,\ldots,v_{m-1})$.
		Hence, $\mathcal{R}(r,m)^*$ is indeed cyclic.
		For $\mathcal{R}(r,m)$ the order of the evaluation tuples is given by the natural representation of the respective position as binary number.
		For the cyclic code $\mathcal{R}(r,m)^*$ the order is the vector representation of $\alpha^{i}$ using the polynomial basis.
		This proves the permutation given in \ref{subsec:equivalence}: $\pi$ maps $i$ to $j$ such that the binary expansion of $i$ is identical to the coefficients of $\alpha^j$ under the polynomial basis.
		
		Now, we find a generator polynomial of $\mathcal{R}(r,m)^*$.
		We need to find the roots of $\sum_{i=0}^{2^m-2} \mathrm{f}(a^{(i)}_{0},\ldots,a^{(i)}_{m-1})\ \alpha^{h\cdot i}$.
		Let $h = \sum_{t=0}^{m-1}h_t 2^t$ with $1\leq \sum_{t=0}^{m-1}h_t \leq m-r-1$ and $h_t\in\{0,1\}$, then
		\begin{align}
		\alpha^{h\cdot i} 
		&=\left(\sum_{j=0}^{m-1} a^{(i)}_{j}\alpha^{j} \right)^{\sum_{t=0}^{m-1}h_t 2^t}\nonumber\\
		&=\prod_{t=0}^{m-1}\left(\sum_{j=0}^{m-1} a^{(i)}_{j}\alpha^{j} \right)^{h_t2^t}\nonumber\\
		&=\prod_{\{t:\ h_t=1\}} \sum_{j=0}^{m-1} a^{(i)}_{j}\alpha^{j2^t} 
		\end{align}
		Following \cite{Kasami}, we can expand the product as
		\begin{equation}
		\alpha^{h\cdot i} = \sum_{\boldsymbol \ell}\left(a^{(i)}_0\right)^{\ell_0}\cdot\ldots\cdot\left(a^{(i)}_{m-1}\right)^{\ell_{m-1}} \alpha^{\beta(h,\boldsymbol \ell)}\;,
		\end{equation}
		where the sum runs over all $\boldsymbol \ell=(\ell_0,\ldots,\ell_{m-1})$ s.t. $1\leq\sum_{t=0}^{m-1} \ell_t = \sum_{t=0}^{m-1}h_t \leq m-r-1$.
		Utilizing that $\beta(h,\boldsymbol \ell)$ does not depend of $j$, we can reorder the sums and obtain
		\begin{align}
		\sum_{i=0}^{2^m-2} \mathrm{f}(a^{(i)}_{0},\ldots,a^{(i)}_{m-1})\ \alpha^{h\cdot i}
		= \sum_{\boldsymbol \ell} \alpha^{\beta(h,\boldsymbol \ell)} \sum_{\boldsymbol s} C_{\boldsymbol s} \nonumber\\\sum_{j=0}^{2^m-2} (a^{(i)}_{0})^{s_{0}+\ell_{0}}\cdot\ldots\cdot(a^{(i)}_{m-1})^{s_{m-1}+\ell_{m-1}},
		\end{align}
		which is $0$ as the innermost sum is $0$:
		Due to the restrictions on $\boldsymbol \ell$ and $\boldsymbol s$, there is at least one $\gamma$ such that $s_\gamma+\ell_\gamma=0$.
		The sum goes through all $m$-tuples $(a^{(i)}_{0},\ldots,a^{(i)}_{m-1})$ in $\mathbb{F}_{2^m}^*$.
		Therefore, the summands $(a^{(i)}_{0})^{s_{0}+\ell_{0}}\cdot\ldots\cdot(a^{(i)}_{\gamma})^0\cdot\ldots\cdot(a^{(i)}_{m-1})^{s_{m-1}+\ell_{m-1}}$ and $(a^{(i)}_{0})^{s_{0}+\ell_{0}}\cdot\ldots\cdot(a^{(i)}_{\gamma}+1)^0\cdot\ldots\cdot(a^{(i)}_{m-1})^{s_{m-1}+\ell_{m-1}}$ add up to $0$.
		The $m$-tuple which has only a $1$ in position $\gamma$ would correspond to the all-zero tuple which is not in $\mathbb{F}_{2^m}^*$, note however that its summand is already $0$, as there is at least one $\theta$ s.t. $s_\theta+\ell_\theta\neq0$, because $h>0$ (in other words, $h=0$ has to be excluded as there would be a sum of an odd number of ones for $\ell=(0,\ldots,0)$).
		Hence, we have seen that all $\alpha^h$ with $1\leq\sum_{t=0}^{m-1}h_t \leq m-r-1$ are roots of all codepolynomials, respectively the generator polynomial. 
		The number of $h$ which satisfy this restriction is
		\begin{align*}
		\binom{m}{1} +\ldots+ \binom{m}{m-r-1} 
		= \binom{m}{m-1} +\ldots+ \binom{m}{r+1} \\
		= 2^m - 1 -\left[\binom{m}{0} +\ldots+ \binom{m}{r}\right] 
		= 2^m -1 - \dim{\mathcal{R}(r,m)^*}.
		\end{align*}
		Hence, there can be no further roots of the generator polynomial.
	\end{appendix}

	\vspace{.2cm}


	\begin{IEEEbiographynophoto}
		{Martin Bossert} (IEEE M'94–SM'03–F'12) received his Dipl.-Ing. degree in Electrical Engineering from the Technical University of Karlsruhe, Germany in 1981, and the Ph.D. from the Technical University of Darmstadt, Germany in 1987. After a one-year DFG scholarship at Linköping University, Sweden, he joined AEG Mobile Communication, where he was involved in the specification and development of the GSM system. Since 1993 he is a Professor at Ulm University, Germany, presently as senior professor of the Institute of Communications Engineering. He is author of several textbooks and co-author of more than 200 papers. He has been a member of the IEEE Information Theory Society board of governors from 2010 to 2012 and has been appointed as a member of the German National Academy of Sciences in 2013. Among other awards and honours, he received the Vodafone Innovationspreis 2007. His research interests are in reliable and secure data transmission. His main focus is on decoding of algebraic codes with reliability information and coded modulation.
		
		\end{IEEEbiographynophoto}
	
		\begin{IEEEbiographynophoto}
		{Rebekka Schulz} received the B.Sc. and M.Sc. degrees in electrical engineering and information technology from  Ulm University, Germany, in 2017 and 2020, respectively. She is currently pursuing the Ph.D. degree with the Institute of Communications Engineering, Ulm University, Germany. Her research interests include Physical-Layer Security and Signal Processing.
	\end{IEEEbiographynophoto}

	\begin{IEEEbiographynophoto}
	{Sebastian Bitzer} (Student Member, IEEE) received the B.Sc. and
	M.Sc. degrees in electrical engineering from Ulm University, Germany, in 2018 and 2021, respectively.
	He is currently pursuing the Ph.D. degree with the Coding and Cryptography
	Group, Institute of Communications Engineering, Technical University of
	Munich (TUM), under the supervision of Prof. Wachter-Zeh.
	His research interests include coding theory and cryptography.
\end{IEEEbiographynophoto}
	
\end{document}